\documentclass[aps,prl,twocolumn,superscriptaddress,groupedaddress,floatfix]{revtex4}
\usepackage[utf8]{inputenc}
\usepackage[T1]{fontenc}
\usepackage{graphicx}
\usepackage{booktabs}
\usepackage{tabularx}
\usepackage{nicefrac}
\usepackage{algpseudocode}
\usepackage[dvipsnames,svgnames]{xcolor}
\usepackage{caption}

\usepackage{amsmath, amssymb, bm}
\usepackage{placeins}
\usepackage{amsthm} 

\newtheorem{lemma}{Lemma}
\newtheorem{theorem}{Theorem}

\usepackage[colorlinks=true,allcolors=blue]{hyperref}

\begin{document}

\title{Can homophily explain public underestimation of climate policy support?}


\author{Ekaterina Landgren}
\affiliation{Department of Environmental Social Sciences, Stanford Doerr School of Sustainability, Stanford University, Stanford, CA 94305, USA}
\affiliation{Cooperative Institute for Research in Environmental Sciences, University of Colorado Boulder, Boulder, CO 80309, USA}
\email{kath.landgren@stanford.edu}

\author{Shriya Nagpal}
\affiliation{Department of Mathematics, Pitzer College, Claremont, CA 91711, USA}

\author{Joshua Garland}
\affiliation{Global Security Initiative Center on Narrative, Disinformation and Strategic Influence, Arizona State University, Tempe, AZ 85287, USA}

\author{Yaw Acquah}
\affiliation{Department of Mathematics, Pitzer College, Claremont, CA 91711, USA}

\author{Matthew G. Burgess}
\affiliation{Department of Economics, College of Business, University of Wyoming, Laramie, WY 82071, USA}

\begin{abstract}
Many climate policies enjoy majority U.S. public support, yet both Republicans and Democrats underestimate that support. Homophily is one possible explanation for this pattern: selective exposure to like-minded views could lead opponents to underestimate support more than supporters. We use a stochastic block model and preferential attachment model to test how homophily and social network structure shape misperception of policy support. Homophily alone explains opponents underestimating support more than supporters, but supporters only underestimate when their own homophily is low enough that they disproportionately associate with opponents. When Bayesian rescaling and inaccurate priors are added to the models, homophily must still be highly asymmetric to explain observed misperception patterns. Adding media bias to the model produces realistic misperception even with symmetric homophily. However, empirical evidence on media bias in coverage of climate change policy is mixed. Our results offer theoretical grounding for understanding misperception of public opinion broadly.
\end{abstract}

\maketitle

\section{Introduction}

Climate change is one of the most polarizing issues in the United States (U.S.), but it is less polarizing than the public believes. Many climate policies enjoy support from two-thirds or more of the U.S. public \cite{marlon2022change,burgess2024climate}. A recent study, however, \cite{sparkman2022americans} found that 80\textendash90\% of Americans underestimate this climate policy support, often by 20 percentage points or more. Similar misperception patterns have also been found in other countries \cite{andre2024globally}. Public overestimation of political polarization is widespread across many issues and measures (e.g., refs. \cite{levendusky2016mis,druckman2022mis}). 

Public underestimation of climate policy support is an example of pluralistic ignorance, whereby people who hold the majority opinion (climate policy supporters, in this case) underestimate their opinion's popularity \cite{geiger2016}. Pluralistic ignorance about climate change could be consequential. For example, it could make individuals discuss climate change or policies addressing it less often with their peers \cite{geiger2016,mildenberger2019}. It could also cause politicians and their staff to underestimate public support for climate change policies, and as a result advance such policies less or oppose them more \cite{hertel2019legislative}.

Why Americans are so misinformed on their peers' views of climate change is a major puzzle in climate politics. In their study documenting this misperception, Sparkman et al. \cite{sparkman2022americans} find it highest among Republicans (who oppose climate policy at higher rates \cite{egan2024us,burgess2024climate}) and those with high exposure to right-leaning news media, on average. These observations suggest two hypotheses explaining misperception: homophily among policy opponents and media bias. 

Homophily describes the tendency of like-minded people to preferentially associate with each other \cite{mcpherson2001birds}. If people opposed to climate change policies preferentially associate with each other, views within their social networks would skew against climate policy (compared to the general public), causing them to underestimate public support for climate policy \cite{dixon2024complexity}. This is known as false consensus \cite{ross1977false}. On the other hand, homophily could also cause people who support climate policy to \textit{overestimate} public support for the same reason. The conditions under which homophily among policy opponents could outweigh homophily among supporters to produce aggregate misperception are poorly understood.

There are at least two ways media bias might cause people to underestimate public support for climate policy. First, media could overrepresent negative views of climate policy, compared to public opinion. This could cause misperception via availability bias \cite{tversky1973availability} or other mechanisms by which elite communications and cues influence public opinion and perception \cite{gerbner1986living,sherman2024connections,samuelson1988status}. 

However, real-world evidence for such media bias is mixed. Boykoff and Boykoff~\cite{boykoff2004balance} famously found a similar phenomenon for climate science in the 1990s and early 2000s, which they called `balance as bias'. Media were overrepresenting views doubting the physical facts regarding human contributions to climate change---compared to the level of consensus among scientists---in an attempt to carry out the journalistic norm of ``balanced'' reporting. A 2021 follow-up study~\cite{mcallister2021balance} found that false balance about climate science was no longer widespread, outside of a few right-leaning media outlets. More recently, Landgren et al.~\cite{landgren2026mediabalance} assessed balance of television coverage of climate policy, compared to public opinion. They found individual outlets provided either overwhelmingly positive coverage (e.g., CNN) or overwhelmingly critical coverage (Fox News). However, they did not find clear evidence of aggregate bias towards critical coverage, compared to public opinion, across all outlets nor in individuals' self-reported media diets. 

Second, media could exaggerate or emphasize polarization, either by emphasizing polarization explicitly or by foregrounding extreme and charged minority views on both sides~\cite{chinn2020politicization,garrett2019partisan}. If this biased people towards perceiving evenly split (50:50) public opinion, it could cause them to underestimate the popularity of popular policies or positions (such as climate policies with two-thirds support), and overestimate the popularity of unpopular policies or positions. However, media bias exaggerating polarization is difficult to measure.

Cognitive processing of uncertainty could also cause people to bias perceptions of policy support towards more even splits (50:50), because uncertainty causes people to overestimate the sizes of small proportions and underestimate the sizes of large ones via Bayesian reasoning \cite{guay2025quirks}. Regardless of the cause, there are documented instances of perception biases away from extreme values. For example, support for political violence tends to be overestimated \cite{mernyk2022correcting} and support for popular forms of gun control tends to be underestimated \cite{dixon2020public}.

Here, we use analytical and simulation analyses of two agent-based social-network models to assess the potential for homophily to cause empirically realistic patterns of public underestimation of climate policy support, both alone and in combination with social network structure, Bayesian reasoning under uncertainty (via Bayesian rescaling and inaccurate priors), and media bias. We compare our models' predictions to empirical results of Sparkman et al. \cite{sparkman2022americans}, who measured misperception at the individual level in a large survey ($N = 6{,}119$) that included several relevant covariates such as media diet and opinions of climate policy.

Our models suggest that homophily alone cannot explain observed misperception patterns, unless the homophily is extremely (and unrealistically) asymmetric. Opponents of climate policy would have to be highly insular, and supporters would need such low homophily that they actually associated disproportionately with opponents.
The empirical evidence on possible asymmetric homophily is mixed. There is some evidence that more conservative social media users exhibited higher homophily~\cite{colleoni2014echo, boutyline2017social}. On the other hand, a recent study suggests that Democrats are more isolated than Republicans in both offline and online settings~\cite{brown2025relationship}.
Adding Bayesian reasoning and inaccurate priors lessens---but does not eliminate---the asymmetric homophily required to explain observed misperception patterns.  Combining realistic homophily with media bias towards overrepresenting negative opinions of climate policy could explain observed misperception patterns. However, there is not clear empirical evidence that such bias exists \cite{landgren2026mediabalance}. Our results thus do not resolve the puzzle of climate policy misperception, but they shed light on conditions under which homophily can and cannot produce pluralistic ignorance and false consensus---insights which apply beyond this particular setting. We discuss how future theoretical and empirical research could help to further resolve the puzzle of climate policy misperception.

\section{Results}

\subsection{Empirical patterns of misperception}

\begin{figure*}[t]
\centering
\includegraphics[width=0.7\linewidth]{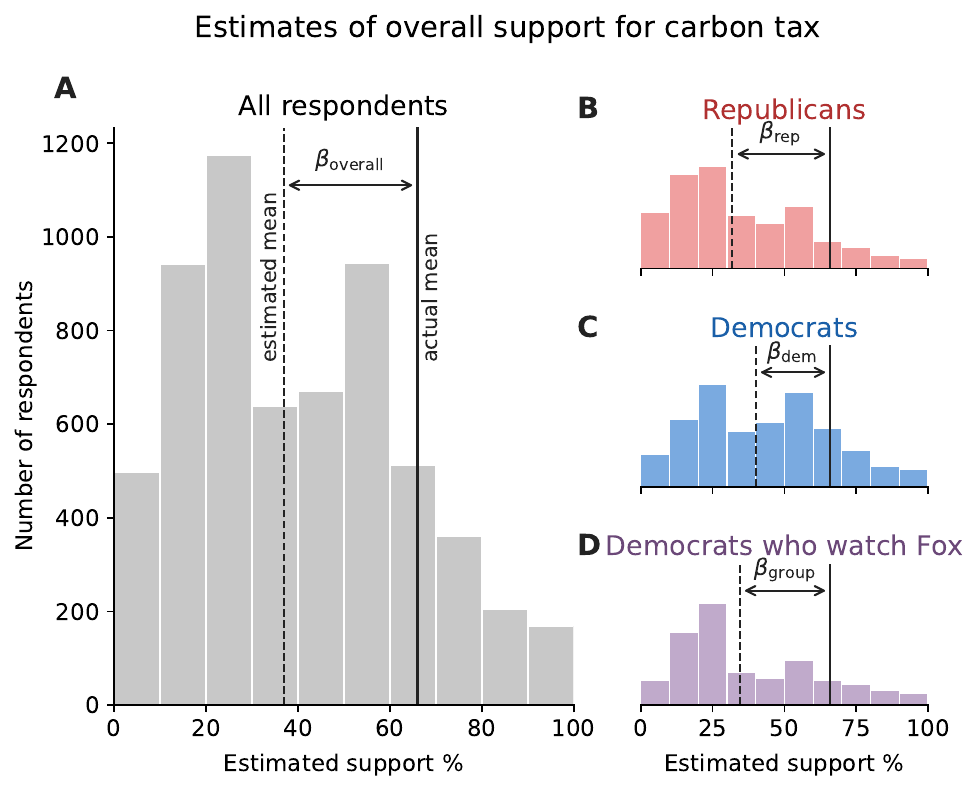}
\caption{Distribution of estimates of public support for a carbon tax from Sparkman et al.'s survey \cite{sparkman2022americans}.
    Respondents were asked to estimate the percentage of Americans who support a carbon tax for fossil fuel companies.
    \textbf{(A)} Distribution of estimates across all survey respondents (10-percentage-point bins).
    The solid vertical line marks actual public support (67\%); the dashed line marks the mean estimated support (37\%).
    The gap between actual support and mean estimate is misperception $\beta_{\mathrm{overall}}$.
    (\textbf{B}--\textbf{D}) Distributions broken down by partisan group and media diet: \textbf{(B)} Republicans, \textbf{(C)} Democrats, and \textbf{(D)} Democrats who report watching Fox News at least once per week.
    Across all groups, respondents substantially underestimated public support for a carbon tax, with Republicans showing the largest misperception and Democrats who watch Fox News showing greater misperception than Democrats overall.}
\label{fig:histograms}
\end{figure*}

To establish an empirical baseline for comparison with our models, we show key patterns of observed misperception \cite{sparkman2022americans} among all Americans, partisan subgroups, and frequent viewers (at least once per week) of Fox News, a right-leaning media outlet whose viewership is less accepting of and concerned about climate change than the general public, on average \cite{feldman2012climate,sparkman2022americans}. Fig.~\ref{fig:histograms} shows misperception of support for a carbon tax.  Misperception levels for four climate policies surveyed by Sparkman et al. \cite{sparkman2022americans} (carbon tax for fossil fuel companies, 100\% renewable energy by 2035, siting RE on public lands, and the Green New Deal), and for whether or not one is `worried' about climate change, show qualitatively similar patterns (See Supporting Information (SI) Tab.~\ref{table:all-demographics}). 

Misperception of support is highest among Republicans and regular Fox News viewers (25\textendash40\% across policies). Conservative news media consumption is positively correlated with average misperception (Fig.~\ref{fig:correlation}). However, average misperception exceeds 20\% within all demographic subgroups Sparkman et al. \cite{sparkman2022americans} surveyed (e.g., by age, race, gender, and education level, in addition to party). This suggests that explanations for misperception of climate policy support are incomplete if they only apply to conservatives, Fox News viewers, or policy opponents.

\subsection{Model}
\subsubsection{Measuring misperception}

We begin by defining the opinion state $s(i) \in \{0,1\}$ for each node $i$, where $s(i)=1$ indicates support for the policy and $s(i)=0$ indicates opposition. The true fraction of supporters in the population is thus
\[
f_s=\frac{1}{N}\sum_{i=1}^{N}s(i),
\]
with $f_o = 1-f_s$ representing the fraction of opponents.

Consider a simple, undirected, unweighted network $G=(V,E)$ with $|V|=N$ nodes and adjacency matrix $A \in \mathbb{R}^{N \times N}$. Network structure matters because individuals do not observe the full population; instead, they form estimates based on their local neighborhood. Specifically, we assume individual $i$ estimates the level of support by averaging over their own opinion and those of their immediate neighbors:
\[
\hat{f}_{s,i}
=
\frac{
s(i)
+
\sum_{j=1}^{N}A_{ij}s(j)
}
{1+\sum_{j=1}^{N}A_{ij}}.
\]

We define individual $i$'s misperception as
\begin{equation}
\beta_{i}=f_s - \hat{f}_{s,i},
\label{eq:1}
\end{equation}
which is positive when support is underestimated and negative when overestimated.

For any group of size $N_{\text{group}}$, the average misperception is
\begin{equation}
\beta_{\text{group}}=\frac{1}{N_{\text{group}}}\sum_{i \in \text{group}} {\beta}_i.
\label{eq:beta_group}
\end{equation}
The quantities of primary interest are $\beta_s$ (misperception among supporters) and $\beta_o$ (misperception among opponents). Substituting the expression for $\hat{f}_{s,i}$ into Eq.~\eqref{eq:beta_group} yields the network-dependent formula
\[
\beta_{\mathrm{group}}(A)
=
f_s
-
\frac{1}{N_{\text{group}}}
\sum_{i \in \text{group}}
\frac{
s(i)
+
\sum_{j=1}^{N}A_{ij}s(j)
}
{1+\sum_{j=1}^{N}A_{ij}}.
\]
Figure~\ref{fig:cartoon} (right) illustrates this mechanism on a path graph with $N=3$.

\subsubsection{Agent-based social-network models}
Our analysis uses two agent-based network models: a preferential attachment model and a stochastic block model.
Both models are social network models; how networks are constructed is the main difference between them. In both models, the agents' opinions influence tie formation. The agents in both models can be thought of as either representing individuals or collective entities such as news outlets or their accounts on a social media site. We are deliberately ambiguous in defining an agent, since the modern information environment blurs the line between an individual with a large platform and a media outlet. An edge between two agents represents that the agents know each other's climate policy views and opinions. We assume that agents use this information to estimate the prevalence of climate policy support in the general public. Thus, if the agents connected to an agent of interest have a different average level of support for the policy, then the agent of interest will misperceive public support. For simplicity, we limit our analysis to undirected edges.

Fig.~\ref{fig:cartoon}A shows conceptual representations of the two models considered in this study. Both models incorporate homophily, but differ in the network formation mechanism, and consequently network structure. In the stochastic block model, homophily parameters directly define the probability of edge formation between agents, producing modular communities. In contrast, in the preferential attachment model with homophily~\cite{karimi2018homophily}, homophily parameters modulate attachment to high-degree agents, producing high degree heterogeneity among the agents.

\begin{figure*}[t]
\centering
\includegraphics[width=0.5\linewidth]{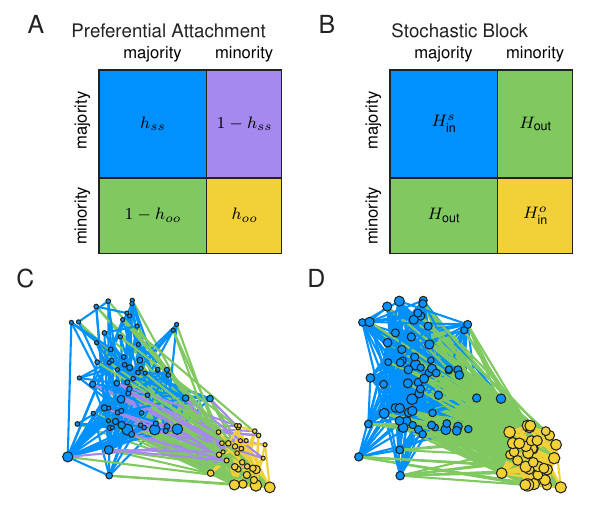}
\raisebox{0.5cm}{%
    \includegraphics[width=0.4\linewidth]{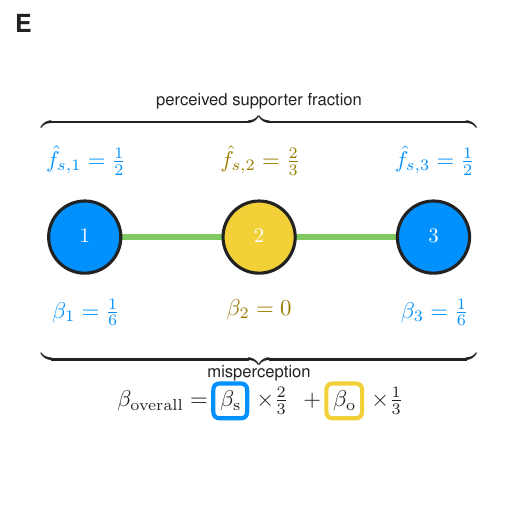}
   }

\caption{\textbf{Differences in homophily implementation between stochastic block model and preferential attachment model.} In the stochastic block model (left), majority\textendash majority edges are governed by the parameter $H_\text{in}^s$, minority\textendash minority edges are governed by the parameter $H_\text{in}^o$, and majority\textendash minority edges are governed by the parameter $H_\text{out}$. These block-level probabilities directly determine the expected density of ties in each quadrant of the mixing matrix. In contrast, in the preferential attachment model (middle), homophily operates through group-specific attachment biases rather than fixed block probabilities. The minority homophily parameter $h_{oo}$ governs the fraction of the ties allocated to minority \textendash minority ties in the configuration model compared to all minority ties, and likewise for  the majority homophily parameter $h_{ss}$ and majority\textendash majority edges. The resulting networks exhibit more pronounced degree heterogeneity under preferential attachment than under the stochastic block model, as shown in the bottom panels. On the right, we show a misperception example for a three-node line graph, following Eq.~\eqref{eq:beta_group}, where individual estimates $\hat{f}_{s,i}$ combine to produce the group-level misperception $\beta_{\text{group}}$.
}
\label{fig:cartoon}
\end{figure*}

\medskip
\noindent
\textbf{Preferential attachment model}

\noindent
The first model we consider uses the well-known principle of preferential attachment to construct the social network. Agents are more likely to connect to other agents that already have many connections \cite{barabasi1999emergence}. This creates a heavy-tailed degree distribution among agents.

In order to capture a range of plausible connection patterns between supporters and opponents, we add two tunable homophily parameters, following Karimi et al. \cite{karimi2018homophily}. Adding homophily makes the probability of connection between agents depend on whether they hold the majority opinion (e.g., supporting climate policy) or the minority opinion. We consider the potentially asymmetric case where a supporter homophily parameter $h_{ss}$ regulates the probability of connection between supporters, and opponent homophily $h_{oo}$ regulates the probability of connection between opponents. Between-group connections are regulated by complementary probabilities: from supporters to opponents $h_{so}=1-h_{ss}$ and from opponents to supporters $h_{os}=1-h_{oo}$. The direction of edge formation only matters during the generation of the network from a configuration model, and the generating process produces an undirected network.

We can express the average perceived fraction of supporters, by opponents and supporters respectively, as follows. Here $p_{oo}$, $p_{ss}$, $p_{os}$, and $p_{so}$ denote the probabilities of opponent--opponent, supporter--supporter, opponent--supporter, and supporter--opponent connections respectively, derived from $h_{ss}$, $h_{oo}$, and the minority growth factor $C$ (see \nameref{appendix:BA}):
\begin{align}
\hat{f}_s^{o} &= 
\frac{\tfrac{f_s}{f_o}\,p_{so} + p_{os}}{2p_{oo} + \tfrac{f_s}{f_o}\,p_{so} + p_{os}}
\label{eq:BA-average-perceived-o} \\[1ex]
\hat{f}_s^{s} &= 
\frac{2\,p_{ss}}{2p_{ss} + \tfrac{f_o}{f_s}\,p_{os} + p_{so}}
\label{eq:BA-average-perceived-s}
\end{align}

We then weight supporter misperception $\beta_s = f_s - \hat{f}_s^{s}$ and opponent misperception $\beta_o = f_s - \hat{f}_s^{o}$ by their respective fractions of the population, $f_s$ and $f_o = 1 - f_s$, to obtain overall misperception:
\begin{equation}
\beta_{\text{overall}} = f_s \,\beta_s + (1 - f_s)\,\beta_o.
\end{equation}

\medskip
\noindent \textbf{Stochastic block model}
\noindent
Our second model is a stochastic block model, which assigns edges between agents with probabilities based on which groups (`blocks') agents belong to. In our case, the blocks are supporters and opponents of a (climate) policy, and the probability of two agents being connected---given which group each agent is in---is governed by parameters related to homophily.  

To formalize this, suppose there are $N$ agents in a network, and define
\[
S := \{1, \dots, \lfloor f_s N \rfloor\}
\]
to be the set of agents supporting the (climate) policy, and 
\[
O := \{\lfloor {f_s}N\rfloor + 1, \dots, N\}
\]
to be the set of agents opposing the policy. To define their connections, consider the function $W:[0,1]^2 \rightarrow[0,1]$ defined by:

\[
W(x, y)=
\begin{cases}
H_{\text{in}}^{s} & \text { if }x, y \in[0,f_s]\\
H_{\text{in}}^{o} & \text { if }x, y \in(f_s,1]\\
H_{\text{out}} & \text { if }x \in[0,f_s], y \in(f_s,1] \text { or }x \in(f_s,1], y \in[0,f_s]
\end{cases}
\]
where $0< H_{\text{in}}^{s},H_{\text{in}}^{o},H_{\text{out}}\leq1$. Here, $H_{\text{in}}^{s}$ and $H_{\text{in}}^{o}$ denote the connection probabilities, respectively, for supporters and opponents with agents from their own group; and $H_{\text{out}}$ denotes the connection probability for supporters and opponents with each other. 

Let $A^N$ denote the adjacency matrix of the network. The network is then generated by the following process:
\[
P\!\left(A_{ij}^N = 1\right) = W_{ij}^N := W\!\left(\frac{i}{N}, \frac{j}{N}\right),
\]
for all \( i \neq j \in \{1, \ldots, N\} \) and $A_{ij}^N = 0$ if $A_{ij}^N \neq 1$. Further, we set $A_{ji}^N =A_{ij}^N$ for $i\neq{j}$. Thus, \( A^{N} \) is a realization of a two-block stochastic block model with within-block connection probability \( H_{\text{in}}^{s},H_{\text{in}}^{o} \) and cross-block connection probability \( H_{\text{out}} \). 

In this model, we consider three network-based measures of misperception. Recall that for any group of individuals $U \subseteq V$, the group misperception is defined as

\[
\beta_U(A^N)
=
\frac{1}{|U|}
\sum_{i\in U}
\beta_i(A^N).
\]

The \textit{overall misperception} is obtained by taking \(U=V\), the \textit{supporters' misperception} by taking \(U=S\), and the \textit{opponents' misperception} by taking \(U=O\). Here, the true level of support in the population is assumed to be $f_s$. Finally, we consider the weighted (expectation) network \(\bar{W}^N\), where \(\bar{W}^{N}_{ij}=W^{N}_{ij}\) for \(i\neq j\) and $0$ otherwise. We define the corresponding misperception measures \(\beta_U(\bar{W}^N)\) analogously to \(\beta_U(A^N)\).

We prove the following theorem, which shows that the misperception measures on the random network \(A^N\) converge almost surely to their deterministic counterparts on the weighted network \(\bar W^N\).

\begin{theorem}
With probability 1,
\begin{itemize}
    \item[(a)] $\lim_{N\rightarrow{\infty}} \left|(\beta_V({A}^{N})-G\right|=0$, where
    \begin{align*}
    G &= f_s-\Bigg(\left(f_s\right) \frac{f_s H_{\text {in }}^s}{f_s H_{\text {in }}^s+\left(1-f_s\right) H_{\text {out }}}\\
    &\quad+\left(1-f_s\right) \frac{f_s H_{\text {out }}}{f_s H_{\text {out }}+\left(1-f_s\right) H_{\text {in }}^o}\Bigg).
    \end{align*}
    \item[(b)] $\lim_{N\rightarrow{\infty}} \left|\beta_S({A}^{N})-\hat{G}\right|=0$, where
    \begin{align*}
    \hat{G}&=  f_s -  \frac{f_sH_{\text{in}}^{s}}{f_sH_{\text{in}}^{s}+(1-f_s)H_{\text{out}}}, \text{and}
    \end{align*}
    \item[(c)]  $\lim_{N\rightarrow{\infty}} \left|\beta_O({A}^{N})-\bar{G}\right|=0$, where
    \begin{align*}
    \bar{G}&=  f_s -  \frac{f_sH_{\text{out}}}{f_sH_{\text{out}} + (1-f_s)H_{\text{in}}^{o}}.
    \end{align*}
\end{itemize}
\end{theorem}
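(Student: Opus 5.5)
The plan is to control each individual estimate $\hat f_{s,i}$ uniformly in $i$, and then average. Let $d_i^S=\sum_{j\in S}A^N_{ij}$ and $d_i^O=\sum_{j\in O}A^N_{ij}$ be the numbers of supporter and opponent neighbours of $i$. Then
\[
\hat f_{s,i}=\frac{s(i)+d_i^S}{1+d_i^S+d_i^O}.
\]
For a supporter $i\in S$, $d_i^S$ is a sum of $|S|-1$ independent Bernoulli$(H_{\text{in}}^s)$ variables and $d_i^O$ is a sum of $|O|$ independent Bernoulli$(H_{\text{out}})$ variables. For an opponent, the parameters are $H_{\text{out}}$ and $H_{\text{in}}^o$ respectively. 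The edges incident to a fixed node are independent, even though the matrix is symmetric, so these are genuine binomial variables. Their means are $(f_sN+O(1))H_{\text{in}}^s$ and $((1-f_s)N+O(1))H_{\text{out}}$, and similarly for opponents. On the weighted network $\bar W^N$ the same formula applies with $d_i^S,d_i^O$ replaced by these means. Since all parameters are strictly positive, letting $N\to\infty$ gives exactly the fractions appearing in $\hat G$ and $\bar G$. The extra $1$ in the numerator and denominator only contributes $O(1/N)$.

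The key probabilistic step is a Hoeffding (or Chernoff) bound with a union bound over nodes. For fixed $\varepsilon>0$,
\[
P\Big(\max_{i\le N}\max\big(|d_i^S-\mathbb E d_i^S|,|d_i^O-\mathbb E d_i^O|\big)>\varepsilon N\Big)\le 4N e^{-2\varepsilon^2 N}.
\]
This is summable in $N$, so Borel--Cantelli gives that, almost surely, $\max_i |d_i^{S}/N-\mathbb E d_i^{S}/N|\to0$, and likewise for $d_i^{O}$. The realizations for different $N$ need not be coupled in any particular way, because Borel--Cantelli only needs summability of the probabilities.

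Next we pass from the degree concentration to the ratios. The map $(x,y)\mapsto x/(x+y)$ is uniformly Lipschitz on the region $\{x+y\ge c\}$, where $c=\min(f_sH_{\text{in}}^s+(1-f_s)H_{\text{out}},\,f_sH_{\text{out}}+(1-f_s)H_{\text{in}}^o)/2>0$. Writing
\[
\hat f_{s,i}=\frac{(s(i)+d_i^S)/N}{(1+d_i^S+d_i^O)/N},
\]
the uniform concentration therefore implies $\max_{i\in S}|\hat f_{s,i}-f_sH_{\text{in}}^s/(f_sH_{\text{in}}^s+(1-f_s)H_{\text{out}})|\to0$ almost surely, and the analogous statement over $i\in O$. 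Averaging over $S$ gives (b), and averaging over $O$ gives (c), since an average is within the maximum deviation of its limit.

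For (a), we use
\[
\beta_V=\frac{|S|}{N}\beta_S+\frac{|O|}{N}\beta_O,
\]
together with $|S|/N\to f_s$ and $|O|/N\to1-f_s$. This yields $G$. The same deterministic computation shows that $\beta_U(\bar W^N)$ converges to the same limits, which gives the comparison with the expectation network.

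The main obstacle is the uniformity over all $N$ nodes together with the almost-sure mode of convergence. Convergence in expectation for a single node is routine, but averaging requires simultaneous control of all nodes. This is why the exponential tail bound is essential: it beats the factor $N$ in the union bound and is summable in $N$. A secondary point is keeping denominators bounded away from zero, which is where the assumption that all parameters are strictly positive is used.
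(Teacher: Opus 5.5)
Your proposal is correct and follows essentially the same route as the paper. Both arguments get uniform-in-$i$ degree concentration from Hoeffding, a union bound over the $N$ nodes and Borel--Cantelli, and then bound the ratio using denominators of order $N$; the paper's $I_1$/$I_2$ split is exactly your Lipschitz step. The differences are organizational. The paper passes explicitly through $\bar W^N$ and proves (a) directly, leaving (b) and (c) as analogous. You instead compare directly to the limits and deduce (a) from (b) and (c) via $\beta_V=\frac{|S|}{N}\beta_S+\frac{|O|}{N}\beta_O$.

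Two small tidy-ups are needed. With a fixed threshold $\varepsilon N$, you must intersect the almost-sure events over $\varepsilon=1/k$, or take $\varepsilon_N=N^{-1/4}$ as the paper effectively does with its $N^{3/4}$ threshold. Also, the Lipschitz bound for $(x,y)\mapsto x/(x+y)$ holds on $\{x,y\ge 0,\ x+y\ge c\}$ rather than on all of $\{x+y\ge c\}$; this suffices since degrees are nonnegative.
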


\begin{proof}
    See \nameref{sec:Extended-Proofs-for-SBM} in Supplementary Information and Extended Data.
\end{proof}

\subsection{Misperception patterns in the model}

\begin{figure*}[t]
\centering
\includegraphics[width=0.67\linewidth]{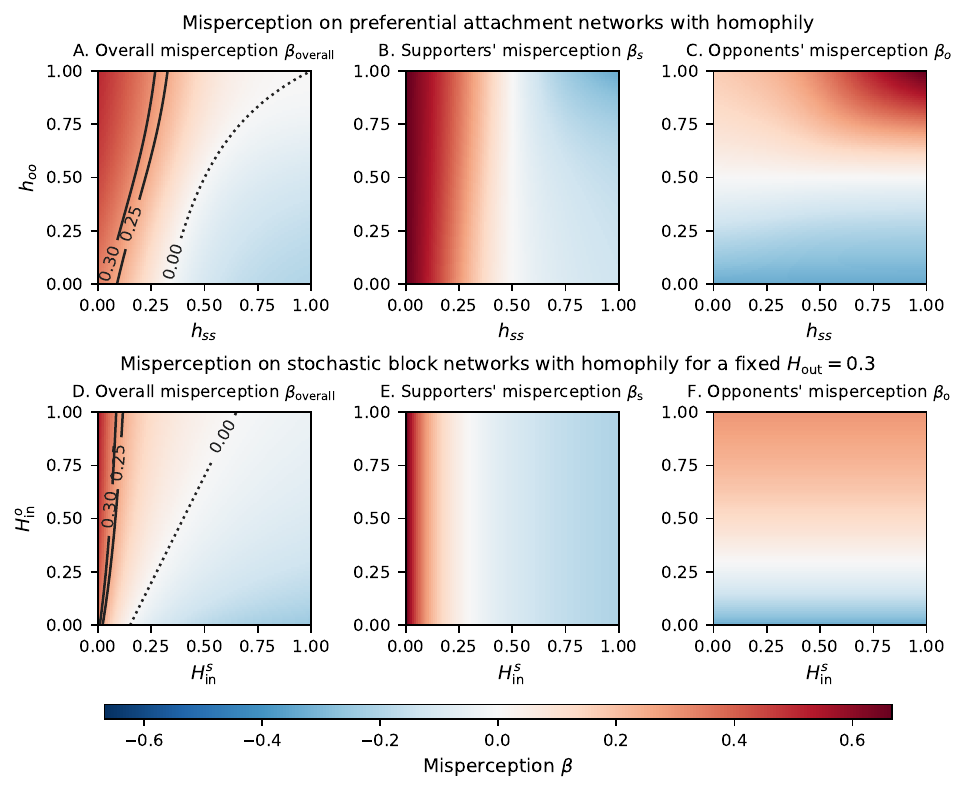}

\caption{
\textbf{Misperception as a function of network homophily in Preferential Attachment and Stochastic Block networks.}
Each row shows overall misperception ($\beta_{\text{overall}}$, the weighted average of group-level misperception), supporters' misperception ($\beta_{s}$), and opponents' misperception ($\beta_{o}$), respectively.
Panels~A--C show the Preferential Attachment network case, where within-group attachment probabilities for the minority ($h_{oo}$) and majority ($h_{ss}$) groups vary across the grid.
Panels~D--F show the Stochastic Block network case, where within-group homophily for supporters ($H_{\text{in}}^{s}$) and opponents ($H_{\text{in}}^{o}$) varies across the grid; the out-group edge probability $H_{\text{out}} = 0.3$ is held fixed.
Contour lines in Panels~A and~D indicate misperception levels consistent with empirical misperception of public support for climate policy~\cite{sparkman2022americans}. Dashed line indicates the boundary between positive and negative misperception.
Across both models, misperception is maximized when homophily among opponents exceeds that among supporters.}

\label{fig:no_transform}
\end{figure*}

We assess which homophily parameter regions give rise to patterns of misperception most consistent with empirical data. Using the analytical results derived in the previous sections, we show misperception in the preferential attachment model (Fig.~\ref{fig:no_transform} top row) and the stochastic block model (Fig.~\ref{fig:no_transform} bottom) for a fixed supporter fraction $f_s=\frac{2}{3}$.

In the preferential attachment model, we independently vary supporter homophily $h_{ss}$ and opponent homophily $h_{oo}$. Supporter misperception $\beta_{s}$ increases monotonically as supporter homophily $h_{ss}$ decreases (Fig.~\ref{fig:no_transform}B). The regime where $\beta_{s}>0$, which corresponds to the underestimation of the fraction of supporters, occurs when supporter homophily $h_{ss}<0.5$, indicating weaker ties between supporters than between supporters and opponents. In contrast, opponent misperception  $\beta_{o}$ increases monotonically as opponent homophily $h_{oo}$ increases (Fig.~\ref{fig:no_transform}C). The regime where $\beta_{o}>0$ occurs when opponent homophily $h_{oo}>0.5$, indicating that stronger ties between opponents than between opponents and supporters are necessary to create misperception. Since there are more supporters than opponents in our model, overall misperception $\beta_{\text{overall}}$ tends to be positive when supporter homophily is low. We observe that the band corresponding to the empirically observed range ($\beta_{\text{overall}} \in [0.25, 0.30]$) covers approximately 6\% of the full $(h_{ss}, h_{oo})$ parameter space, and overall misperception is higher above the diagonal, where $h_{oo}>h_{ss}$, indicating stronger connections between opponents than between supporters (Fig.~\ref{fig:no_transform}A).

In other words, for misperception to occur among both supporters and opponents, opponents must exhibit moderate-to-strong homophily and supporters must exhibit homophily so weak that they actually prefer to associate with opponents. In the more realistic case where supporters prefer to associate among themselves ($h_{ss}>0.5$), aggregate misperception requires opponents to have strong enough homophily and underestimation of support to outweigh supporters' weaker overestimation of support and greater numbers.

In the stochastic block model, three parameters are needed to specify the regime. For ease of comparison between the two models, we hold $H_{\text{out}}$, the probability of connection between supporters and opponents, fixed, and vary supporter homophily $H_{\text{in}}^s$ and opponent homophily $H_{\text{in}}^o$.

Misperception patterns are similar to the preferential attachment case, with supporter misperception $\beta_{s}$ increasing monotonically as $H_{\text{in}}^s$ decreases, and opponent misperception $\beta_{o}$ increasing monotonically as $H_{\text{in}}^o$ increases. Overall misperception is again highest when supporter homophily is low and opponent homophily is high. Supporters must again preferentially associate with opponents ($H_{\text{in}}^s< H_{\text{out}}$) to underestimate support ($\beta_{s}>0$) (Fig.~\ref{fig:no_transform}E), and opponents must exhibit homophily ($\beta_{o}>0$ when $H_{\text{in}}^o>H_{\text{out}}$) (Fig~\ref{fig:no_transform}F). Compared to the preferential attachment model, asymmetric homophily among opponents and/or preferential outgroup association among supporters would need to be more extreme to produce observed misperception patterns in the stochastic block model: the empirical band covers only approximately 3\% of the $(H_{\text{in}}^s, H_{\text{in}}^o)$ parameter space (with $H_{\text{out}}=0.3$ fixed), roughly half the fraction of the preferential attachment model.

\subsection{Model extensions and mechanism exploration}

The results above place strong constraints on purely homophily-based explanations for the observed misperception patterns.
To explain observed patterns of misperception of climate policy support with homophily alone, the above results suggest that we need large asymmetries in homophily, with opponents clustering tightly while supporters actively seek out opponents, exhibiting outgroup preference. Given that this scenario is implausible, homophily, while likely present, would need to act in combination with additional mechanisms in order to explain the observed misperception patterns.

To explore this possibility, we consider two model extensions. First, we implement Bayesian rescaling: a cognitive mechanism that adjusts the individuals' perceptions of the popular support according to plausible psychological processes that lead to overestimation of small proportions and underestimation of large proportions under uncertainty~\cite{guay2025quirks}. We also allow for inaccurate prior beliefs. Second, we simulate a biased information environment (e.g., in media). We simulate this in our model via artificial promotion of minority opinion (opposition) nodes into central positions (i.e., more connected and influential) in the network. 

\subsubsection{Bayesian rescaling}
\label{sec:bayesian-rescaling}
Recent literature shows that the cognitive processes governing proportion estimation under uncertainty might explain widespread misperception of group sizes, with individuals systematically overestimating small proportions and underestimating large ones~\cite{guay2025quirks}. We incorporate this additional cognitive complexity into our models as post-hoc rescaling of the information the agents receive from the network, which depends on the level of uncertainty and on agents' preconceptions about the proportions. Within a Bayesian setting, we explore the effects on misperception that arise from interaction between homophily, prior expectations, and the level of uncertainty.

Suppose $\delta \in(0, \infty)$, $\gamma \in(0,1]$. We consider the transformation 
$$\hat{f}_{s,i} = \Psi(f_{s,i})=\frac{\delta^{1-\gamma} f_{s,i}^\gamma}{\delta^{1-\gamma} f_{s,i}^\gamma+(1-f_{s,i})^\gamma},$$

\noindent where $f_{s,i}=\frac{s(i)+\sum_{j=1}^{N}A_{ij}s(j)}{1+\sum_{j=1}^{N}A_{ij}}$, i.e., the proportion of node $i$’s neighbors (including itself) that are supporters \cite{guay2025quirks}. The parameter $\gamma$ represents uncertainty, where $\gamma=1$ represents complete trust in current observation, and values of $\gamma$ close to 0 indicate more emphasis on the prior information rather than current observation. The parameter $\delta$ represents the prior in odds, $\delta = \dfrac{f_s}{f_o}$. The rescaling function $\Psi(f_{s,i})$ is shown in Fig.~\ref{fig:bayesian_rescaling}A. Note that when the observation is equal to the prior, there is no rescaling. Otherwise, proportions below the prior are overestimated and proportions above the prior are underestimated.

\begin{figure*}[t]
\centering
\begin{center}
\includegraphics[width=0.67\linewidth]{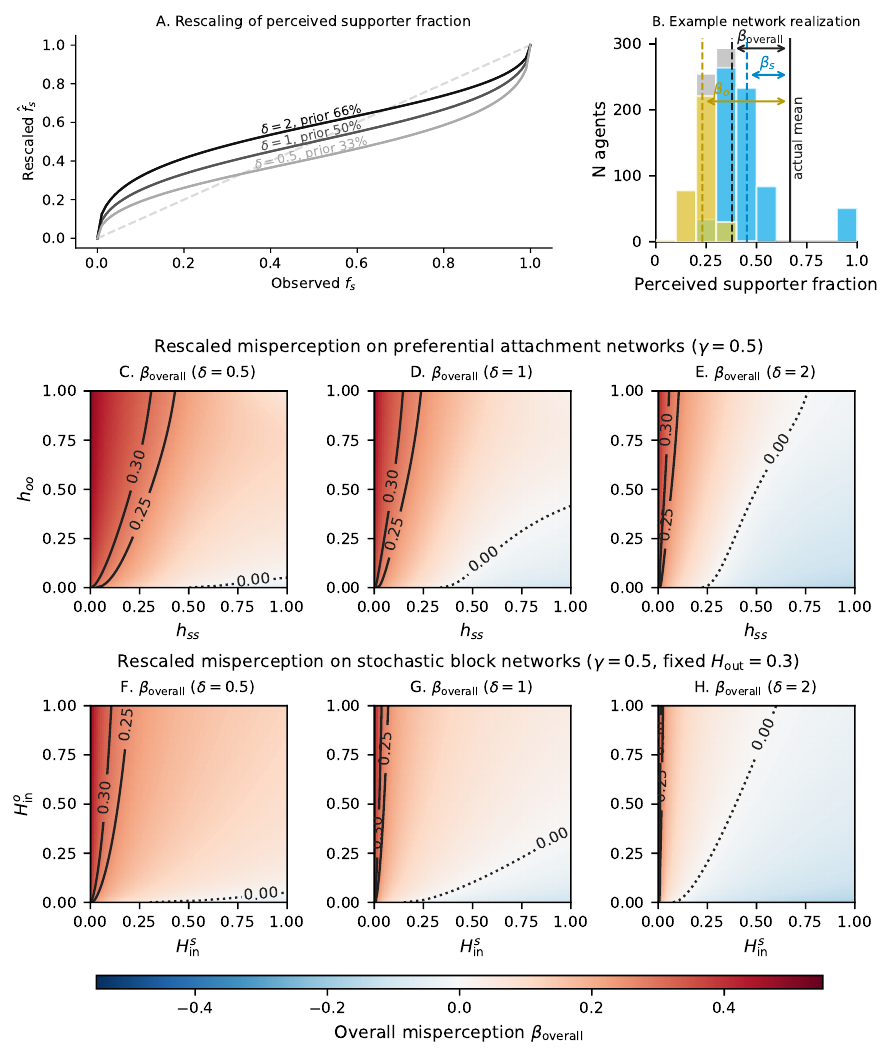}
\end{center}

\caption{\textbf{Effect of Bayesian rescaling on overall misperception.}
  (A) Bayesian rescaling of the perceived supporter fraction $\hat{f}_s$ as a
  function of the observed fraction $f_s$, for three values of the prior parameter
  $\delta$ at fixed $\gamma = 0.5$.
  $\delta =0.5$ shifts perceptions toward underestimation;
  $\delta = 1$  corresponds to a neutral prior;
  $\delta = 2$ corresponds to an accurate prior ($f_s=2/3$).
  The dashed diagonal shows the identity.
  (B) Distribution of Bayesian-rescaled perceived supporter fractions across all agents
  in an example preferential attachment network ($N=1,000$, $f_s = 2/3$, $h_{ss} = h_{oo} = 0.7$, $\delta = 0.1$, $\gamma = 0.5$).
  Grey bars: full population; blue bars: supporters ($f_s = 2/3$);
  yellow bars: opponents ($f_o = 1/3$).
  The solid vertical line marks true supporter prevalence; dashed lines mark each
  group's mean perceived fraction.
  Arrows indicate group-level misperception $\beta_s$, $\beta_o$, and overall
  misperception $\beta_\text{overall}$.
  (C--E) Overall misperception $\beta_\text{overall}$ on preferential attachment
  networks as a function of supporter in-group homophily $h_{ss}$ and opponent
  in-group homophily $h_{oo}$, for $\delta = 0.5$, $1$, and $2$ ($\gamma = 0.5$).
  Black contour lines mark $\beta_\text{overall} = 0.25$ and $0.30$, corresponding
  to the approximate values of empirically observed misperception of climate policy
  support~\cite{sparkman2022americans}; the dotted contour marks
  $\beta_\text{overall} = 0$.
  (F--H) Overall misperception on stochastic block networks ($\gamma = 0.5$,
  $H_\text{out} = 0.3$ fixed) as a function of supporter in-group homophily
  $H^{s}_\text{in}$ and opponent in-group homophily $H^{o}_\text{in}$,
  for $\delta = 0.5$, $1$, and $2$.
  Across both network types and all $\delta$ values, misperception is highest when
  opponent homophily exceeds supporter homophily.}
\label{fig:bayesian_rescaling}
\end{figure*}

\medskip

\noindent 

\medskip

\noindent \textbf{Bayesian rescaling for the preferential attachment model.} We can compute the rescaling for average supporters and average opponents by simply adjusting the average perceptions in Eq.~\eqref{eq:BA-average-perceived-o} and Eq.~\eqref{eq:BA-average-perceived-s}:
\begin{align}
\beta_{\text{overall}} ={}& f_o \left( f_s -
\Psi \left( \frac{\tfrac{f_s}{f_o}\,p_{so} + p_{os}}{2p_{oo} + \tfrac{f_s}{f_o}\,p_{so} + p_{os}}\right) \right) \nonumber\\
&{}+ f_s  \left(f_s - \Psi \left(
\frac{2\,p_{ss}}{2p_{ss} + \tfrac{f_o}{f_s}\,p_{os} + p_{so}}\right) \right).
\end{align}

Here, the rescaling function $\Psi$ captures systematic distortions in proportion estimation, rescaling the average composition of the nodes' nearest neighbors according to the prior estimates and uncertainty.

Similarly to the non-rescaled case, in order to account for the effects of degree heterogeneity within the network, we conduct numerical simulations presented in (SI Figs.~\ref{fig:si-network-size}--\ref{fig:si-density}).

\medskip
\noindent

\textbf{Bayesian rescaling for the stochastic block model.} Here we compute the misperception for agents on stochastic block networks. The overall misperception, supporters' misperception, and opponents' misperception, under this uncertainty based rescaling, are given by:

\[
\beta_{V}^\Psi(A^{N}) = f_s - \frac{1}{N} \sum_{i=1}^{n}\hat{f}_{s,i},
\]
\[
\beta_S^{\Psi}(A^{N}) = f_s - \frac{1}{|S|} \sum_{i=1}^{\lfloor {f_s}N\rfloor} \hat{f}_{s,i}, 
\]
\[
\beta_{O}^{\Psi}(A^{N}) = f_s - \frac{1}{|O|} \sum_{i=\lfloor {f_s}N\rfloor+1}^{n}\hat{f}_{s,i}.
\]

Next, let:

\begin{align*}
q_S&:=\frac{f_s H_{\text{in}}^{s}}{f_s H_{\text{in}}^{s}+(1-f_s) H_{\text{out}}}, \\
q_O&:=\frac{f_s H_{\text{out}}}{f_s H_{\text{out}}+(1-f_s) H_{\text{in}}^{o}}
\end{align*}

and consider the following transformations:

\[G^{\Psi}:=f_s-\left(f_s \Psi\left(q_S\right)+(1-f_s) \Psi\left(q_O\right)\right),\]
\[\widehat{G}^{\Psi}:=f_s-\Psi\left(q_S\right),\] 
\[\bar{G}^{\Psi}:=f_s-\Psi\left(q_O\right).\]

\begin{theorem} With probability 1

\begin{itemize}
    \item[(a)] $\lim _{N \rightarrow \infty}\left|\beta_V^{\Psi}\left(A^{N}\right)-G^{\Psi}\right|=0$,
    \item[(b)] $\lim _{N \rightarrow \infty}\left|\beta_{S}^{\Psi}\left(A^{N}\right)-\widehat{G}^{\Psi}\right|=0$, \textbf{and}
    \item[(c)] $\lim _{N \rightarrow \infty}\left|\beta_{O}^{\Psi}\left(A^{N}\right)-\bar{G}^{\Psi}\right|=0$.
\end{itemize}
\end{theorem}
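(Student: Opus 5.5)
The plan is to reduce this theorem to a uniform version of the concentration argument behind Theorem 1, and then pass the limit through $\Psi$ using its continuity on $[0,1]$. Since $\delta>0$ and $\gamma\in(0,1]$, the map $x\mapsto \delta^{1-\gamma}x^\gamma/(\delta^{1-\gamma}x^\gamma+(1-x)^\gamma)$ is continuous on $[0,1]$: the denominator never vanishes, because at $x=0$ it equals $1$ and at $x=1$ it equals $\delta^{1-\gamma}>0$. Hence $\Psi$ is uniformly continuous on the compact interval $[0,1]$. Because $\Psi$ is nonlinear, it is not enough to know that the \emph{average} of the local fractions $f_{s,i}$ converges, as in Theorem 1. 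I will instead show the stronger statement that, with probability 1,
\[
\max_{i\in S}|f_{s,i}-q_S|\to 0,\qquad \max_{i\in O}|f_{s,i}-q_O|\to 0 .
\]
Given this, for any $\varepsilon>0$ uniform continuity supplies an $\eta>0$ with $|\Psi(x)-\Psi(y)|<\varepsilon$ whenever $|x-y|<\eta$. Then $|\beta_S^\Psi(A^N)-\widehat G^\Psi|\le \max_{i\in S}|\Psi(f_{s,i})-\Psi(q_S)|<\varepsilon$ eventually, and part (c) follows in the same way. Part (a) follows from (b) and (c) by writing $\beta_V^\Psi=\frac{|S|}{N}\beta_S^\Psi+\frac{|O|}{N}\beta_O^\Psi$ and noting that $|S|/N=\lfloor f_sN\rfloor/N\to f_s$.

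The main step is therefore uniform concentration of the local fractions. For $i\in S$, let $D_i^S=\sum_{j\in S}A_{ij}$ and $D_i^O=\sum_{j\in O}A_{ij}$. These are independent sums of Bernoulli variables with means $(|S|-1)H_{\text{in}}^s$ and $|O|H_{\text{out}}$ respectively. Then
\[
f_{s,i}=\frac{1+D_i^S}{1+D_i^S+D_i^O}.
\]
By Hoeffding's inequality, $P(|D_i^S-\mathbb{E}D_i^S|>N^{3/4})\le 2\exp(-2N^{1/2})$, and the same bound holds for $D_i^O$. A union bound over the $N$ nodes gives failure probability $4N\exp(-2N^{1/2})$, which is summable in $N$. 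Borel--Cantelli then shows that almost surely, for all large $N$, every $D_i^S/N$ is within $N^{-1/4}$ of $f_sH_{\text{in}}^s+o(1)$ and every $D_i^O/N$ is within $N^{-1/4}$ of $(1-f_s)H_{\text{out}}$. Note that the random graphs for different $N$ need not be coupled, since Borel--Cantelli only uses summability of the failure probabilities. Dividing numerator and denominator of $f_{s,i}$ by $N$ and using that the limiting denominator $f_sH_{\text{in}}^s+(1-f_s)H_{\text{out}}>0$ (all $H$'s are positive), the map $(a,b)\mapsto a/(a+b)$ is Lipschitz near the limit point. This yields $\max_{i\in S}|f_{s,i}-q_S|\to0$ almost surely. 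The opponent case is symmetric, with $D_i^S$ having mean $|S|H_{\text{out}}$ and $D_i^O$ mean $(|O|-1)H_{\text{in}}^o$, giving the limit $q_O$.

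I expect this uniform-over-nodes concentration to be the main obstacle, because Theorem 1 as stated only controls group averages. If its supplementary proof already establishes per-node concentration with summable failure probabilities, I would cite that bound directly. Otherwise the Hoeffding and union-bound argument above supplies it. Everything after that is continuity and bookkeeping, with the $+1$ self-terms and the $\lfloor f_sN\rfloor$ rounding contributing only $O(1/N)$ errors.
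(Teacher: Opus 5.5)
Your proposal is correct and follows essentially the paper's route. The paper's one-line proof defers to Theorem 1, whose supplementary Lemma 2 already bounds every node's local fraction uniformly in $i$. It uses the same Hoeffding (threshold $N^{3/4}$), union-bound and Borel--Cantelli argument you describe, though it passes through the expected network $\bar{W}^N$ rather than going straight to $q_S, q_O$. Your explicit use of the uniform continuity of $\Psi$ on $[0,1]$ supplies the step the paper's ``analogous'' leaves implicit, which is what carries the uniform per-node convergence through the nonlinear rescaling.
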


\begin{proof}
    The proof is analogous to that of Theorem 1, with the local proportions replaced by their transformed values under $\Psi$.
\end{proof}

\noindent Analogously to Theorem 1, Theorem 2 defines the limiting behavior of misperception as network size grows, with the added Bayesian rescaling assumption.

\medskip
\noindent
\textbf{Misperception patterns with Bayesian rescaling.} In the preferential attachment model, inaccurate priors and Bayesian rescaling allow for realistic misperception patterns under less-extreme parameter values (Fig. \ref{fig:bayesian_rescaling}C\textendash E). For example, if people have an inaccurate prior belief of 50:50-split support (Fig. \ref{fig:bayesian_rescaling}D), it is possible for both homophilic supporters and opponents to underestimate support under a wide range of parameter values. Uncertainty and Bayesian rescaling can even allow for homophilic supporters to underestimate support under realistic priors ($\delta = 2$). However, aggregate underestimation of support by more than 20\% (as observed by Sparkman et al. \cite{sparkman2022americans}) still requires outgroup preference among supporters (Fig. \ref{fig:bayesian_rescaling}), which is unrealistic. While the magnitude of misperception differs from the mean-field approximation due to degree heterogeneity, the qualitative trends are preserved across network sizes and densities (SI Figs.~\ref{fig:si-network-size}--\ref{fig:si-density}).

The stochastic block model exhibits a qualitatively similar pattern: introducing Bayesian rescaling makes the assumptions required for realistic misperception patterns less extreme (Fig.~\ref{fig:bayesian_rescaling}F) than in the base case (Fig.~\ref{fig:no_transform}D).

\subsubsection{Media bias simulated via artificial promotion of minority nodes}
To simulate effects of media bias, we introduce a mechanism in the preferential attachment
model that promotes opposing (minority) views in the network by swapping the positions of
the highest-degree majority (supporting) node and the lowest-degree minority (opposing) node.
With a small number of swaps ($<10\%$) in a network where both opponents and supporters share
high homophily ($h_{ss} = h_{oo} = 0.75$), misperception resembles the real-world pattern
(Fig.~\ref{fig:histograms}): supporters and opponents both underestimate climate policy
support, and opponents underestimate more (Fig.~\ref{fig:swaps_effect}).
There is also a bimodal distribution in misperception (Fig.~\ref{fig:swaps_effect}), as
observed in reality~\cite{sparkman2022americans} (Fig.~\ref{fig:histograms}).
This result holds across the full range of within-group homophily parameters and when
the homophily of the supporter and opponent groups is allowed to differ
(SI Figs.~\ref{fig:asymmetric_swaps}--\ref{fig:asymmetric_diagram}).
Note that this swapping mechanism relies on degree variation among nodes and is therefore
specific to preferential attachment networks; it is not directly applicable to stochastic
block models, which do not feature such variation.

\begin{figure*}[t]
\centering

\includegraphics[width=0.9\linewidth]{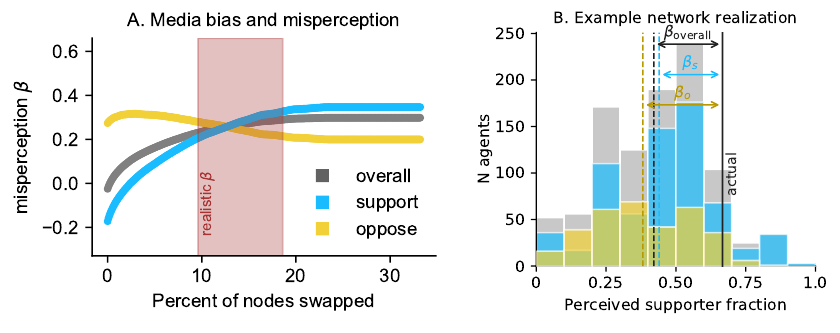}
\caption{Panel A shows the relationship between the fraction of nodes swapped and misperception $\beta$, for a network of size $10^3$, averaged over $10^3$ simulations. The starting homophily is $h=0.75$, corresponding to a network with realistically plausible high homophily. The fraction of nodes swapped is a proxy for the over-representation of minority opinion among best-connected nodes, e.g. through media bias. Panel B shows the distribution for the estimated opinion for a network with 10\% of nodes swapped for a single representative simulation of a network of $10^3$ agents.  Note the network exhibits misperception among both supporters and opponents.}
\label{fig:swaps_effect}
\end{figure*}

\section{Discussion}

Public underestimation of climate policy support is a major puzzle in climate politics \cite{sparkman2022americans,andre2024globally}, with parallels to other issues \cite{dixon2020public,mernyk2022correcting}. U.S. adults across the political spectrum underestimate public support for climate policies on the whole, and Republicans and Fox News viewers underestimate support more than other Americans \cite{sparkman2022americans}. 

Our results suggest that homophily cannot explain these misperception patterns on its own. Homophily could explain why policy opponents underestimate support, but it cannot explain why policy supporters underestimate support. Homophily could explain aggregate levels of misperception, but only if the homophily is assumed to be highly asymmetric, with policy opponents being more homophilic than supporters. Adding Bayesian rescaling to the model can explain supporter and overall misperception patterns observed, but again only with highly asymmetric homophily and/or inaccurate prior beliefs. We find that symmetric homophily can explain the observed misperception patterns when combined with media bias---simulated in our study by manipulating the network centrality of views opposing the policy.

The empirical evidence on asymmetric homophily is mixed. A key research limitation here is that homophily is most easily measured among politically active individuals online, who are a highly unrepresentative sample of the general population \cite{robertson2024inside}. Some studies have found politically conservative social media users and bloggers to more selectively associate with each other \cite{boutyline2017social,adamic2005political}, and liberal users to be exposed to a broader range of online and cultural content from across the political spectrum \cite{rogers2022liberals,chang2025liberals}. On the other hand, recent studies have also found liberals to more likely than conservatives to avoid friendships and romantic partnerships with, or online-block, people that they disagreed with politically \cite{martel2024blocking,norman2025can,sleiman2025sleeping}.

Empirical evidence on media bias is also mixed. Media outlets have historically been somewhat biased towards balanced coverage of the physical science of climate change, in contrast to broad scientific consensus \cite{boykoff2004balance}. This bias has attenuated over the past two decades, but persists in some right-leaning outlets \cite{mcallister2021balance}. However, a recent study of television news coverage of climate policy \cite{landgren2026mediabalance} did not find opposing views to be aggregately overrepresented, compared to public opinion. Instead, the authors found evidence for polarized coverage---with overwhelmingly positive coverage of climate policies by left-leaning outlets, and overwhelmingly negative coverage by right-leaning outlets. These findings were limited to television, and there could be media biases in other media. However, the patterns this study found are not consistent with the aggregate media bias necessary in our model to explain the observed misperception patterns.

In the context of our model, it is difficult to assess empirical evidence for inaccurate priors without introducing a potential tautology. Misperception itself could indicate inaccurate priors (e.g., towards more evenly divided opinions than there actually are). However, considering observed misperception as evidence of inaccurate priors, and then using the inaccurate priors to explain observed misperception in our model, is tautological. Nonetheless, Americans' documented tendency to overestimate polarization \cite{chinn2020politicization,garrett2019partisan} could cause them to have priors that generally underestimate support for popular policies, including climate policies \cite{landgren2026mediabalance}. 

Future research should explore other possible explanations of public underestimation of climate policy support. Such explanations might include, for example, self-silencing by policy supporters \cite{dixon2024complexity}, and the low salience of climate change compared to some other issues \cite{burgess2024climate}. Another proposed explanation is that opponents of climate policy have higher rates of ego-projection, or egocentric bias, than supporters, based on the partisan patterns with regard to false consensus and false uniqueness~\cite{zimmaro2025meta}. It is, of course, also possible that the levels of misperception observed by Sparkman et al. \cite{sparkman2022americans} are unrepresentative. Replication studies could help to resolve what misperception patterns exist and how they have or have not changed over time.

We focus our discussion on the issue of climate change, but our analysis could shed light on any other issue where there is public misperception of opinion. For example, Americans tend to underestimate support for gun control \cite{dixon2020public} and overestimate support for political violence \cite{mernyk2022correcting}. How homophily---asymmetric or not---could interact with media bias and inaccurate priors to produce misperception patterns is no less relevant to understanding these issues than climate policies. 

Our analysis has several limitations worth noting. To isolate the role of homophily, our models assume that individuals estimate the prevalence of support by sampling the opinions of their immediate neighbors, although alternative frameworks consider more complex forms of social sampling~\cite{dalege2025networks, zimmaro2025meta}. In both the stochastic block and preferential attachment models, individuals hold binary opinions (support or oppose) and infer public opinion solely from local observations. We also assume that homophily is fixed. In practice, homophily and perceptions likely co-evolve, as individuals may form or dissolve social ties in response to their perceptions of public opinion.

Additionally, our models assume that individuals infer public opinion from the local average of observed opinions, whereas real-world perceptions are likely shaped by a broader set of informational sources. In the Bayesian rescaling analysis, we further simplify by assuming that agents share common priors and uncertainty levels. Finally, we model social ties as undirected, treating perceived opinions between connected agents symmetrically. This assumption may be less appropriate for settings involving asymmetric information sources, such as media systems, where influence predominantly flows in one direction.

Our comparisons of model predictions to observed misperception patterns from surveys also come with caveats regarding survey research. Survey assessments of policy support can be sensitive to framing and context. For example, framing climate policies in terms of job creation and socioeconomic impact increases bipartisan support \cite{diamond2022whose}. Partisan tribalism can make people reject climate policies they would otherwise support if they believed the opposing party was proposing the policies \cite{van2018psychological}. Survey respondents may also not report their views/opinions faithfully due to factors such as social desirability bias---whereby respondents bias their responses to appear in a more socially favorable light \cite{vesely2020social}. We also note that our quantitative benchmarks derive from a single empirical study~\cite{sparkman2022americans}; if this study overestimates the typical magnitude of misperception due to sampling, timing, or methodological factors, then our results may overpredict the strength of mechanisms required to generate observed patterns.

Understanding why members of the public tend to underestimate support for climate change policies is an important puzzle in climate politics, with parallels to many other issues. Underestimation of support could affect policy uptake, and perhaps even support itself \cite{mastroianni2022widespread}. Our results suggest that homophily could play a role in promoting observed misperception, but only in combination with media bias, inaccurate priors, or some other mechanism with a similar effect biasing all views, rather than only the views of policy opponents. The effects of homophily on opinion formation and perception, in combination with other mechanisms, deserve further study.

\section{Methods}
\label{sec:mat-methods}
\subsection{Simulation details}
Simulations of networks with a majority-minority split were constructed using the algorithm described in ref. \cite{karimi2018homophily}. The algorithm for the simulation of node swaps is described in SI Methods.

\subsection{Data Availability} For part of this work, we used previously published, nationally representative panel survey data in the U.S. about perceptions of public support for climate change policies \cite{sparkman2022americans}.
Code is available on GitHub.\footnote{\url{https://github.com/kathlandgren/homophily-misperception}}

\section*{Acknowledgements}
We thank the members of the Burgess lab for comments. E.L. was supported by the Cooperative Institute for Research in Environmental Sciences Visiting Fellows Program, funded by NOAA Cooperative Agreement NA22OAR4320151, as well as by the Stanford Doerr School of Sustainability Dean's Sustainability Leaders Fellowship.

\section*{Author contributions statement} E.L., S.N., J.G., and M.G.B. designed research. E.L., S.N., and Y.A. performed research. E.L., S.N., J.G., and M.G.B. analyzed data. E.L., S.N., and M.G.B. wrote the paper. All co-authors reviewed the data and manuscript drafts.

\section*{Additional information}

The authors declare no competing interests.

\bibliography{main}

\begin{thebibliography}{45}
\expandafter\ifx\csname natexlab\endcsname\relax\def\natexlab#1{#1}\fi
\expandafter\ifx\csname bibnamefont\endcsname\relax
  \def\bibnamefont#1{#1}\fi
\expandafter\ifx\csname bibfnamefont\endcsname\relax
  \def\bibfnamefont#1{#1}\fi
\expandafter\ifx\csname citenamefont\endcsname\relax
  \def\citenamefont#1{#1}\fi
\expandafter\ifx\csname url\endcsname\relax
  \def\url#1{\texttt{#1}}\fi
\expandafter\ifx\csname urlprefix\endcsname\relax\def\urlprefix{URL }\fi
\providecommand{\bibinfo}[2]{#2}
\providecommand{\eprint}[2][]{\url{#2}}

\bibitem[{\citenamefont{Marlon et~al.}(2022)\citenamefont{Marlon, Wang,
  Bergquist, Howe, Leiserowitz, Maibach, Mildenberger, and
  Rosenthal}}]{marlon2022change}
\bibinfo{author}{\bibfnamefont{J.~R.} \bibnamefont{Marlon}},
  \bibinfo{author}{\bibfnamefont{X.}~\bibnamefont{Wang}},
  \bibinfo{author}{\bibfnamefont{P.}~\bibnamefont{Bergquist}},
  \bibinfo{author}{\bibfnamefont{P.~D.} \bibnamefont{Howe}},
  \bibinfo{author}{\bibfnamefont{A.}~\bibnamefont{Leiserowitz}},
  \bibinfo{author}{\bibfnamefont{E.}~\bibnamefont{Maibach}},
  \bibinfo{author}{\bibfnamefont{M.}~\bibnamefont{Mildenberger}},
  \bibnamefont{and}
  \bibinfo{author}{\bibfnamefont{S.}~\bibnamefont{Rosenthal}},
  \bibinfo{journal}{Environ. Res. Lett.} \textbf{\bibinfo{volume}{17}},
  \bibinfo{pages}{124046} (\bibinfo{year}{2022}).

\bibitem[{\citenamefont{Burgess et~al.}(2024)\citenamefont{Burgess, Suarez,
  Dancer, Watkins, and Marshall}}]{burgess2024climate}
\bibinfo{author}{\bibfnamefont{M.~G.} \bibnamefont{Burgess}},
  \bibinfo{author}{\bibfnamefont{C.}~\bibnamefont{Suarez}},
  \bibinfo{author}{\bibfnamefont{A.}~\bibnamefont{Dancer}},
  \bibinfo{author}{\bibfnamefont{L.~J.} \bibnamefont{Watkins}},
  \bibnamefont{and} \bibinfo{author}{\bibfnamefont{R.~E.}
  \bibnamefont{Marshall}}, \bibinfo{journal}{A Center for Social and
  Environmental Futures Report}  (\bibinfo{year}{2024}).

\bibitem[{\citenamefont{Sparkman et~al.}(2022)\citenamefont{Sparkman, Geiger,
  and Weber}}]{sparkman2022americans}
\bibinfo{author}{\bibfnamefont{G.}~\bibnamefont{Sparkman}},
  \bibinfo{author}{\bibfnamefont{N.}~\bibnamefont{Geiger}}, \bibnamefont{and}
  \bibinfo{author}{\bibfnamefont{E.~U.} \bibnamefont{Weber}},
  \bibinfo{journal}{Nature communications} \textbf{\bibinfo{volume}{13}},
  \bibinfo{pages}{4779} (\bibinfo{year}{2022}).

\bibitem[{\citenamefont{Andre et~al.}(2024)\citenamefont{Andre, Boneva, Chopra,
  and Falk}}]{andre2024globally}
\bibinfo{author}{\bibfnamefont{P.}~\bibnamefont{Andre}},
  \bibinfo{author}{\bibfnamefont{T.}~\bibnamefont{Boneva}},
  \bibinfo{author}{\bibfnamefont{F.}~\bibnamefont{Chopra}}, \bibnamefont{and}
  \bibinfo{author}{\bibfnamefont{A.}~\bibnamefont{Falk}},
  \bibinfo{journal}{Nat. Clim. Change} pp. \bibinfo{pages}{1--7}
  (\bibinfo{year}{2024}).

\bibitem[{\citenamefont{Levendusky and Malhotra}(2016)}]{levendusky2016mis}
\bibinfo{author}{\bibfnamefont{M.~S.} \bibnamefont{Levendusky}}
  \bibnamefont{and} \bibinfo{author}{\bibfnamefont{N.}~\bibnamefont{Malhotra}},
  \bibinfo{journal}{Public Opinion Quarterly} \textbf{\bibinfo{volume}{80}},
  \bibinfo{pages}{378} (\bibinfo{year}{2016}).

\bibitem[{\citenamefont{Druckman et~al.}(2022)\citenamefont{Druckman, Klar,
  Krupnikov, Levendusky, and Ryan}}]{druckman2022mis}
\bibinfo{author}{\bibfnamefont{J.~N.} \bibnamefont{Druckman}},
  \bibinfo{author}{\bibfnamefont{S.}~\bibnamefont{Klar}},
  \bibinfo{author}{\bibfnamefont{Y.}~\bibnamefont{Krupnikov}},
  \bibinfo{author}{\bibfnamefont{M.}~\bibnamefont{Levendusky}},
  \bibnamefont{and} \bibinfo{author}{\bibfnamefont{J.~B.} \bibnamefont{Ryan}},
  \bibinfo{journal}{The Journal of Politics} \textbf{\bibinfo{volume}{84}},
  \bibinfo{pages}{1106} (\bibinfo{year}{2022}).

\bibitem[{\citenamefont{Geiger and Swim}(2016)}]{geiger2016}
\bibinfo{author}{\bibfnamefont{N.}~\bibnamefont{Geiger}} \bibnamefont{and}
  \bibinfo{author}{\bibfnamefont{J.~K.} \bibnamefont{Swim}},
  \bibinfo{journal}{Journal of Environmental Psychology}
  \textbf{\bibinfo{volume}{47}}, \bibinfo{pages}{79} (\bibinfo{year}{2016}),
  ISSN \bibinfo{issn}{0272-4944},
  \urlprefix\url{https://www.sciencedirect.com/science/article/pii/S027249441630038X}.

\bibitem[{\citenamefont{Mildenberger and Tingley}(2019)}]{mildenberger2019}
\bibinfo{author}{\bibfnamefont{M.}~\bibnamefont{Mildenberger}}
  \bibnamefont{and} \bibinfo{author}{\bibfnamefont{D.}~\bibnamefont{Tingley}},
  \bibinfo{journal}{British Journal of Political Science}
  \textbf{\bibinfo{volume}{49}}, \bibinfo{pages}{1279–1307}
  (\bibinfo{year}{2019}).

\bibitem[{\citenamefont{Hertel-Fernandez
  et~al.}(2019)\citenamefont{Hertel-Fernandez, Mildenberger, and
  Stokes}}]{hertel2019legislative}
\bibinfo{author}{\bibfnamefont{A.}~\bibnamefont{Hertel-Fernandez}},
  \bibinfo{author}{\bibfnamefont{M.}~\bibnamefont{Mildenberger}},
  \bibnamefont{and} \bibinfo{author}{\bibfnamefont{L.~C.}
  \bibnamefont{Stokes}}, \bibinfo{journal}{American Political Science Review}
  \textbf{\bibinfo{volume}{113}}, \bibinfo{pages}{1} (\bibinfo{year}{2019}).

\bibitem[{\citenamefont{Egan and Mullin}(2024)}]{egan2024us}
\bibinfo{author}{\bibfnamefont{P.~J.} \bibnamefont{Egan}} \bibnamefont{and}
  \bibinfo{author}{\bibfnamefont{M.}~\bibnamefont{Mullin}},
  \bibinfo{journal}{PS: Political Science \& Politics}
  \textbf{\bibinfo{volume}{57}}, \bibinfo{pages}{30} (\bibinfo{year}{2024}).

\bibitem[{\citenamefont{McPherson et~al.}(2001)\citenamefont{McPherson,
  Smith-Lovin, and Cook}}]{mcpherson2001birds}
\bibinfo{author}{\bibfnamefont{M.}~\bibnamefont{McPherson}},
  \bibinfo{author}{\bibfnamefont{L.}~\bibnamefont{Smith-Lovin}},
  \bibnamefont{and} \bibinfo{author}{\bibfnamefont{J.~M.} \bibnamefont{Cook}},
  \bibinfo{journal}{Annual review of sociology} \textbf{\bibinfo{volume}{27}},
  \bibinfo{pages}{415} (\bibinfo{year}{2001}).

\bibitem[{\citenamefont{Dixon et~al.}(2024)\citenamefont{Dixon, Clarke,
  Jacquet, Evensen, and Hart}}]{dixon2024complexity}
\bibinfo{author}{\bibfnamefont{G.}~\bibnamefont{Dixon}},
  \bibinfo{author}{\bibfnamefont{C.}~\bibnamefont{Clarke}},
  \bibinfo{author}{\bibfnamefont{J.}~\bibnamefont{Jacquet}},
  \bibinfo{author}{\bibfnamefont{D.~T.} \bibnamefont{Evensen}},
  \bibnamefont{and} \bibinfo{author}{\bibfnamefont{P.~S.} \bibnamefont{Hart}},
  \bibinfo{journal}{Commun. Earth Environ.} \textbf{\bibinfo{volume}{5}},
  \bibinfo{pages}{76} (\bibinfo{year}{2024}).

\bibitem[{\citenamefont{Ross et~al.}(1977)\citenamefont{Ross, Greene, and
  House}}]{ross1977false}
\bibinfo{author}{\bibfnamefont{L.}~\bibnamefont{Ross}},
  \bibinfo{author}{\bibfnamefont{D.}~\bibnamefont{Greene}}, \bibnamefont{and}
  \bibinfo{author}{\bibfnamefont{P.}~\bibnamefont{House}}, \bibinfo{journal}{J.
  of Exp. Soc. Psychol.} \textbf{\bibinfo{volume}{13}}, \bibinfo{pages}{279}
  (\bibinfo{year}{1977}).

\bibitem[{\citenamefont{Tversky and Kahneman}(1973)}]{tversky1973availability}
\bibinfo{author}{\bibfnamefont{A.}~\bibnamefont{Tversky}} \bibnamefont{and}
  \bibinfo{author}{\bibfnamefont{D.}~\bibnamefont{Kahneman}},
  \bibinfo{journal}{Cognitive psychology} \textbf{\bibinfo{volume}{5}},
  \bibinfo{pages}{207} (\bibinfo{year}{1973}).

\bibitem[{\citenamefont{Gerbner et~al.}(1986)\citenamefont{Gerbner, Gross,
  Morgan, and Signorielli}}]{gerbner1986living}
\bibinfo{author}{\bibfnamefont{G.}~\bibnamefont{Gerbner}},
  \bibinfo{author}{\bibfnamefont{L.}~\bibnamefont{Gross}},
  \bibinfo{author}{\bibfnamefont{M.}~\bibnamefont{Morgan}}, \bibnamefont{and}
  \bibinfo{author}{\bibfnamefont{N.}~\bibnamefont{Signorielli}},
  \bibinfo{journal}{Perspectives on media effects}
  \textbf{\bibinfo{volume}{1986}}, \bibinfo{pages}{17} (\bibinfo{year}{1986}).

\bibitem[{\citenamefont{Sherman and Van~Boven}(2024)}]{sherman2024connections}
\bibinfo{author}{\bibfnamefont{D.~K.} \bibnamefont{Sherman}} \bibnamefont{and}
  \bibinfo{author}{\bibfnamefont{L.}~\bibnamefont{Van~Boven}},
  \bibinfo{journal}{Social Issues and Policy Review}
  \textbf{\bibinfo{volume}{18}}, \bibinfo{pages}{31} (\bibinfo{year}{2024}).

\bibitem[{\citenamefont{Samuelson and Zeckhauser}(1988)}]{samuelson1988status}
\bibinfo{author}{\bibfnamefont{W.}~\bibnamefont{Samuelson}} \bibnamefont{and}
  \bibinfo{author}{\bibfnamefont{R.}~\bibnamefont{Zeckhauser}},
  \bibinfo{journal}{Journal of risk and uncertainty}
  \textbf{\bibinfo{volume}{1}}, \bibinfo{pages}{7} (\bibinfo{year}{1988}).

\bibitem[{\citenamefont{Boykoff and Boykoff}(2004)}]{boykoff2004balance}
\bibinfo{author}{\bibfnamefont{M.~T.} \bibnamefont{Boykoff}} \bibnamefont{and}
  \bibinfo{author}{\bibfnamefont{J.~M.} \bibnamefont{Boykoff}},
  \bibinfo{journal}{Global environmental change} \textbf{\bibinfo{volume}{14}},
  \bibinfo{pages}{125} (\bibinfo{year}{2004}).

\bibitem[{\citenamefont{McAllister et~al.}(2021)\citenamefont{McAllister, Daly,
  Chandler, McNatt, Benham, and Boykoff}}]{mcallister2021balance}
\bibinfo{author}{\bibfnamefont{L.}~\bibnamefont{McAllister}},
  \bibinfo{author}{\bibfnamefont{M.}~\bibnamefont{Daly}},
  \bibinfo{author}{\bibfnamefont{P.}~\bibnamefont{Chandler}},
  \bibinfo{author}{\bibfnamefont{M.}~\bibnamefont{McNatt}},
  \bibinfo{author}{\bibfnamefont{A.}~\bibnamefont{Benham}}, \bibnamefont{and}
  \bibinfo{author}{\bibfnamefont{M.}~\bibnamefont{Boykoff}},
  \bibinfo{journal}{Environmental Research Letters}
  \textbf{\bibinfo{volume}{16}}, \bibinfo{pages}{094008}
  (\bibinfo{year}{2021}).

\bibitem[{\citenamefont{Landgren et~al.}(2026)\citenamefont{Landgren,
  Osborne-Gowey, Garland, Boykoff, and Burgess}}]{landgren2026mediabalance}
\bibinfo{author}{\bibfnamefont{E.}~\bibnamefont{Landgren}},
  \bibinfo{author}{\bibfnamefont{J.}~\bibnamefont{Osborne-Gowey}},
  \bibinfo{author}{\bibfnamefont{J.}~\bibnamefont{Garland}},
  \bibinfo{author}{\bibfnamefont{M.~T.} \bibnamefont{Boykoff}},
  \bibnamefont{and} \bibinfo{author}{\bibfnamefont{M.~G.}
  \bibnamefont{Burgess}}, \bibinfo{journal}{Environmental Research
  Communications}  (\bibinfo{year}{2026}).

\bibitem[{\citenamefont{Chinn et~al.}(2020)\citenamefont{Chinn, Hart, and
  Soroka}}]{chinn2020politicization}
\bibinfo{author}{\bibfnamefont{S.}~\bibnamefont{Chinn}},
  \bibinfo{author}{\bibfnamefont{P.~S.} \bibnamefont{Hart}}, \bibnamefont{and}
  \bibinfo{author}{\bibfnamefont{S.}~\bibnamefont{Soroka}},
  \bibinfo{journal}{Science Communication} \textbf{\bibinfo{volume}{42}},
  \bibinfo{pages}{112} (\bibinfo{year}{2020}).

\bibitem[{\citenamefont{Garrett et~al.}(2019)\citenamefont{Garrett, Long, and
  Jeong}}]{garrett2019partisan}
\bibinfo{author}{\bibfnamefont{R.~K.} \bibnamefont{Garrett}},
  \bibinfo{author}{\bibfnamefont{J.~A.} \bibnamefont{Long}}, \bibnamefont{and}
  \bibinfo{author}{\bibfnamefont{M.~S.} \bibnamefont{Jeong}},
  \bibinfo{journal}{Journal of Communication} \textbf{\bibinfo{volume}{69}},
  \bibinfo{pages}{490} (\bibinfo{year}{2019}).

\bibitem[{\citenamefont{Guay et~al.}(2025)\citenamefont{Guay, Marghetis, Wong,
  and Landy}}]{guay2025quirks}
\bibinfo{author}{\bibfnamefont{B.}~\bibnamefont{Guay}},
  \bibinfo{author}{\bibfnamefont{T.}~\bibnamefont{Marghetis}},
  \bibinfo{author}{\bibfnamefont{C.}~\bibnamefont{Wong}}, \bibnamefont{and}
  \bibinfo{author}{\bibfnamefont{D.}~\bibnamefont{Landy}},
  \bibinfo{journal}{Proceedings of the National Academy of Sciences}
  \textbf{\bibinfo{volume}{122}}, \bibinfo{pages}{e2413064122}
  (\bibinfo{year}{2025}).

\bibitem[{\citenamefont{Mernyk et~al.}(2022)\citenamefont{Mernyk, Pink,
  Druckman, and Willer}}]{mernyk2022correcting}
\bibinfo{author}{\bibfnamefont{J.~S.} \bibnamefont{Mernyk}},
  \bibinfo{author}{\bibfnamefont{S.~L.} \bibnamefont{Pink}},
  \bibinfo{author}{\bibfnamefont{J.~N.} \bibnamefont{Druckman}},
  \bibnamefont{and} \bibinfo{author}{\bibfnamefont{R.}~\bibnamefont{Willer}},
  \bibinfo{journal}{Proceedings of the National Academy of Sciences}
  \textbf{\bibinfo{volume}{119}}, \bibinfo{pages}{e2116851119}
  (\bibinfo{year}{2022}).

\bibitem[{\citenamefont{Dixon et~al.}(2020)\citenamefont{Dixon, Garrett,
  Susmann, and Bushman}}]{dixon2020public}
\bibinfo{author}{\bibfnamefont{G.}~\bibnamefont{Dixon}},
  \bibinfo{author}{\bibfnamefont{K.}~\bibnamefont{Garrett}},
  \bibinfo{author}{\bibfnamefont{M.}~\bibnamefont{Susmann}}, \bibnamefont{and}
  \bibinfo{author}{\bibfnamefont{B.~J.} \bibnamefont{Bushman}},
  \bibinfo{journal}{JAMA network open} \textbf{\bibinfo{volume}{3}},
  \bibinfo{pages}{e2029571} (\bibinfo{year}{2020}).

\bibitem[{\citenamefont{Colleoni et~al.}(2014)\citenamefont{Colleoni, Rozza,
  and Arvidsson}}]{colleoni2014echo}
\bibinfo{author}{\bibfnamefont{E.}~\bibnamefont{Colleoni}},
  \bibinfo{author}{\bibfnamefont{A.}~\bibnamefont{Rozza}}, \bibnamefont{and}
  \bibinfo{author}{\bibfnamefont{A.}~\bibnamefont{Arvidsson}},
  \bibinfo{journal}{Journal of communication} \textbf{\bibinfo{volume}{64}},
  \bibinfo{pages}{317} (\bibinfo{year}{2014}).

\bibitem[{\citenamefont{Boutyline and Willer}(2017)}]{boutyline2017social}
\bibinfo{author}{\bibfnamefont{A.}~\bibnamefont{Boutyline}} \bibnamefont{and}
  \bibinfo{author}{\bibfnamefont{R.}~\bibnamefont{Willer}},
  \bibinfo{journal}{Political psychology} \textbf{\bibinfo{volume}{38}},
  \bibinfo{pages}{551} (\bibinfo{year}{2017}).

\bibitem[{\citenamefont{Brown et~al.}(2025)\citenamefont{Brown, Ventura,
  Tucker, and Nagler}}]{brown2025relationship}
\bibinfo{author}{\bibfnamefont{M.~A.} \bibnamefont{Brown}},
  \bibinfo{author}{\bibfnamefont{T.}~\bibnamefont{Ventura}},
  \bibinfo{author}{\bibfnamefont{J.~A.} \bibnamefont{Tucker}},
  \bibnamefont{and} \bibinfo{author}{\bibfnamefont{J.}~\bibnamefont{Nagler}},
  \bibinfo{journal}{arXiv preprint arXiv:2512.07121}  (\bibinfo{year}{2025}).

\bibitem[{\citenamefont{Feldman et~al.}(2012)\citenamefont{Feldman, Maibach,
  Roser-Renouf, and Leiserowitz}}]{feldman2012climate}
\bibinfo{author}{\bibfnamefont{L.}~\bibnamefont{Feldman}},
  \bibinfo{author}{\bibfnamefont{E.~W.} \bibnamefont{Maibach}},
  \bibinfo{author}{\bibfnamefont{C.}~\bibnamefont{Roser-Renouf}},
  \bibnamefont{and}
  \bibinfo{author}{\bibfnamefont{A.}~\bibnamefont{Leiserowitz}},
  \bibinfo{journal}{The International Journal of Press/Politics}
  \textbf{\bibinfo{volume}{17}}, \bibinfo{pages}{3} (\bibinfo{year}{2012}).

\bibitem[{\citenamefont{Karimi et~al.}(2018)\citenamefont{Karimi, G{\'e}nois,
  Wagner, Singer, and Strohmaier}}]{karimi2018homophily}
\bibinfo{author}{\bibfnamefont{F.}~\bibnamefont{Karimi}},
  \bibinfo{author}{\bibfnamefont{M.}~\bibnamefont{G{\'e}nois}},
  \bibinfo{author}{\bibfnamefont{C.}~\bibnamefont{Wagner}},
  \bibinfo{author}{\bibfnamefont{P.}~\bibnamefont{Singer}}, \bibnamefont{and}
  \bibinfo{author}{\bibfnamefont{M.}~\bibnamefont{Strohmaier}},
  \bibinfo{journal}{Scientific reports} \textbf{\bibinfo{volume}{8}},
  \bibinfo{pages}{11077} (\bibinfo{year}{2018}).

\bibitem[{\citenamefont{Barab{\'a}si and Albert}(1999)}]{barabasi1999emergence}
\bibinfo{author}{\bibfnamefont{A.-L.} \bibnamefont{Barab{\'a}si}}
  \bibnamefont{and} \bibinfo{author}{\bibfnamefont{R.}~\bibnamefont{Albert}},
  \bibinfo{journal}{Science} \textbf{\bibinfo{volume}{286}},
  \bibinfo{pages}{509} (\bibinfo{year}{1999}).

\bibitem[{\citenamefont{Robertson et~al.}(2024)\citenamefont{Robertson,
  Del~Rosario, and Van~Bavel}}]{robertson2024inside}
\bibinfo{author}{\bibfnamefont{C.~E.} \bibnamefont{Robertson}},
  \bibinfo{author}{\bibfnamefont{K.~S.} \bibnamefont{Del~Rosario}},
  \bibnamefont{and} \bibinfo{author}{\bibfnamefont{J.~J.}
  \bibnamefont{Van~Bavel}}, \bibinfo{journal}{Current Opinion in Psychology}
  \textbf{\bibinfo{volume}{60}}, \bibinfo{pages}{101918}
  (\bibinfo{year}{2024}).

\bibitem[{\citenamefont{Adamic and Glance}(2005)}]{adamic2005political}
\bibinfo{author}{\bibfnamefont{L.~A.} \bibnamefont{Adamic}} \bibnamefont{and}
  \bibinfo{author}{\bibfnamefont{N.}~\bibnamefont{Glance}}, in
  \emph{\bibinfo{booktitle}{Proceedings of the 3rd international workshop on
  Link discovery}} (\bibinfo{year}{2005}), pp. \bibinfo{pages}{36--43}.

\bibitem[{\citenamefont{Rogers and Jost}(2022)}]{rogers2022liberals}
\bibinfo{author}{\bibfnamefont{N.}~\bibnamefont{Rogers}} \bibnamefont{and}
  \bibinfo{author}{\bibfnamefont{J.~T.} \bibnamefont{Jost}},
  \bibinfo{journal}{Journal of the Association for Consumer Research}
  \textbf{\bibinfo{volume}{7}}, \bibinfo{pages}{255} (\bibinfo{year}{2022}).

\bibitem[{\citenamefont{Chang et~al.}(2025)\citenamefont{Chang, Druckman,
  Ferrara, and Willer}}]{chang2025liberals}
\bibinfo{author}{\bibfnamefont{H.-C.~H.} \bibnamefont{Chang}},
  \bibinfo{author}{\bibfnamefont{J.~N.} \bibnamefont{Druckman}},
  \bibinfo{author}{\bibfnamefont{E.}~\bibnamefont{Ferrara}}, \bibnamefont{and}
  \bibinfo{author}{\bibfnamefont{R.}~\bibnamefont{Willer}},
  \bibinfo{journal}{PNAS nexus} \textbf{\bibinfo{volume}{4}},
  \bibinfo{pages}{pgaf206} (\bibinfo{year}{2025}).

\bibitem[{\citenamefont{Martel et~al.}(2024)\citenamefont{Martel, Mosleh, Yang,
  Zaman, and Rand}}]{martel2024blocking}
\bibinfo{author}{\bibfnamefont{C.}~\bibnamefont{Martel}},
  \bibinfo{author}{\bibfnamefont{M.}~\bibnamefont{Mosleh}},
  \bibinfo{author}{\bibfnamefont{Q.}~\bibnamefont{Yang}},
  \bibinfo{author}{\bibfnamefont{T.}~\bibnamefont{Zaman}}, \bibnamefont{and}
  \bibinfo{author}{\bibfnamefont{D.~G.} \bibnamefont{Rand}},
  \bibinfo{journal}{PNAS nexus} \textbf{\bibinfo{volume}{3}},
  \bibinfo{pages}{pgae161} (\bibinfo{year}{2024}).

\bibitem[{\citenamefont{Norman and Green}(2025)}]{norman2025can}
\bibinfo{author}{\bibfnamefont{J.~M.} \bibnamefont{Norman}} \bibnamefont{and}
  \bibinfo{author}{\bibfnamefont{B.}~\bibnamefont{Green}},
  \bibinfo{journal}{Political Psychology} \textbf{\bibinfo{volume}{46}},
  \bibinfo{pages}{1364} (\bibinfo{year}{2025}).

\bibitem[{\citenamefont{Sleiman et~al.}(2025)\citenamefont{Sleiman, Melios, and
  Dolan}}]{sleiman2025sleeping}
\bibinfo{author}{\bibfnamefont{Y.}~\bibnamefont{Sleiman}},
  \bibinfo{author}{\bibfnamefont{G.}~\bibnamefont{Melios}}, \bibnamefont{and}
  \bibinfo{author}{\bibfnamefont{P.}~\bibnamefont{Dolan}},
  \bibinfo{journal}{Political Science Research and Methods} pp.
  \bibinfo{pages}{1--23} (\bibinfo{year}{2025}).

\bibitem[{\citenamefont{Zimmaro and Olsson}(2025)}]{zimmaro2025meta}
\bibinfo{author}{\bibfnamefont{F.}~\bibnamefont{Zimmaro}} \bibnamefont{and}
  \bibinfo{author}{\bibfnamefont{H.}~\bibnamefont{Olsson}},
  \bibinfo{journal}{arXiv preprint arXiv:2502.14362}  (\bibinfo{year}{2025}).

\bibitem[{\citenamefont{Dalege et~al.}(2025)\citenamefont{Dalege, Galesic, and
  Olsson}}]{dalege2025networks}
\bibinfo{author}{\bibfnamefont{J.}~\bibnamefont{Dalege}},
  \bibinfo{author}{\bibfnamefont{M.}~\bibnamefont{Galesic}}, \bibnamefont{and}
  \bibinfo{author}{\bibfnamefont{H.}~\bibnamefont{Olsson}},
  \bibinfo{journal}{Psychological review} \textbf{\bibinfo{volume}{132}},
  \bibinfo{pages}{253} (\bibinfo{year}{2025}).

\bibitem[{\citenamefont{Diamond and Zhou}(2022)}]{diamond2022whose}
\bibinfo{author}{\bibfnamefont{E.}~\bibnamefont{Diamond}} \bibnamefont{and}
  \bibinfo{author}{\bibfnamefont{J.}~\bibnamefont{Zhou}},
  \bibinfo{journal}{Environmental Politics} \textbf{\bibinfo{volume}{31}},
  \bibinfo{pages}{991} (\bibinfo{year}{2022}).

\bibitem[{\citenamefont{Van~Boven et~al.}(2018)\citenamefont{Van~Boven, Ehret,
  and Sherman}}]{van2018psychological}
\bibinfo{author}{\bibfnamefont{L.}~\bibnamefont{Van~Boven}},
  \bibinfo{author}{\bibfnamefont{P.~J.} \bibnamefont{Ehret}}, \bibnamefont{and}
  \bibinfo{author}{\bibfnamefont{D.~K.} \bibnamefont{Sherman}},
  \bibinfo{journal}{Perspectives on Psychological Science}
  \textbf{\bibinfo{volume}{13}}, \bibinfo{pages}{492} (\bibinfo{year}{2018}).

\bibitem[{\citenamefont{Vesely and Kl{\"o}ckner}(2020)}]{vesely2020social}
\bibinfo{author}{\bibfnamefont{S.}~\bibnamefont{Vesely}} \bibnamefont{and}
  \bibinfo{author}{\bibfnamefont{C.~A.} \bibnamefont{Kl{\"o}ckner}},
  \bibinfo{journal}{Frontiers in psychology} \textbf{\bibinfo{volume}{11}},
  \bibinfo{pages}{1395} (\bibinfo{year}{2020}).

\bibitem[{\citenamefont{Mastroianni and
  Dana}(2022)}]{mastroianni2022widespread}
\bibinfo{author}{\bibfnamefont{A.~M.} \bibnamefont{Mastroianni}}
  \bibnamefont{and} \bibinfo{author}{\bibfnamefont{J.}~\bibnamefont{Dana}},
  \bibinfo{journal}{Proceedings of the National Academy of Sciences}
  \textbf{\bibinfo{volume}{119}}, \bibinfo{pages}{e2107260119}
  (\bibinfo{year}{2022}).

\bibitem[{\citenamefont{Lee et~al.}(2019)\citenamefont{Lee, Karimi, Wagner, Jo,
  Strohmaier, and Galesic}}]{lee2019homophily}
\bibinfo{author}{\bibfnamefont{E.}~\bibnamefont{Lee}},
  \bibinfo{author}{\bibfnamefont{F.}~\bibnamefont{Karimi}},
  \bibinfo{author}{\bibfnamefont{C.}~\bibnamefont{Wagner}},
  \bibinfo{author}{\bibfnamefont{H.-H.} \bibnamefont{Jo}},
  \bibinfo{author}{\bibfnamefont{M.}~\bibnamefont{Strohmaier}},
  \bibnamefont{and} \bibinfo{author}{\bibfnamefont{M.}~\bibnamefont{Galesic}},
  \bibinfo{journal}{Nat. Hum. Behav.} \textbf{\bibinfo{volume}{3}},
  \bibinfo{pages}{1078} (\bibinfo{year}{2019}).

\end{thebibliography}

\clearpage
\onecolumngrid

\begin{center}
\Large \textbf{Supporting Information for\\
Can homophily explain public underestimation of climate policy support?}
\end{center}

\FloatBarrier

\renewcommand{\thefigure}{S\arabic{figure}}
\setcounter{figure}{0}

\renewcommand{\thetable}{S\arabic{table}}
\setcounter{table}{0}

\renewcommand{\theequation}{S\arabic{equation}}
\setcounter{equation}{0}

\renewcommand{\thepage}{S\arabic{page}} 
\setcounter{page}{1}

\section{Survey Data and Additional Empirical Results}
This present study contains a partial reanalysis of data from a previous survey study by Sparkman et al.
The survey participants are a large sample of U.S.\ adults ($N=6{,}119$) between April and May 2021 recruited through the Ipsos eNation Omnibus nationally representative panel.

A subset of the participants ($N=5{,}375$) provided responses on media consumption habits, used in main text Fig.~\ref{fig:histograms}D.

SI Table~\ref{table:all-demographics} extends the carbon tax data presented Fig.~\ref{fig:histograms} to four additional climate issues and a broader set of demographic groups.
Misperception $\beta_{\text{group}}$ is substantial across all policies and groups, ranging from approximately 17\% (Democrats, worried about climate) to 42\% (Republicans who watch Fox News, siting RE).
Republicans and Fox News viewers exhibit higher misperception than Democrats and independents, and regular Fox News consumption is associated with increased misperception within each partisan group.
Siting renewable energy on public lands shows the largest misperception overall (actual support: 80\%), while worry about climate change shows the smallest.

\begin{table*}[hbtp]
\caption{Misperception $\beta$, based on the survey~\cite{sparkman2022americans}, computed for several demographic groups.}
\centering
\begin{tabular}{llllll}
                                                & \multicolumn{5}{l}{Misperception $\beta_{\text{group}}$ of support for position or policy}                                                                     \\ \cline{2-6}
\multicolumn{1}{l|}{}                           & \multicolumn{1}{l|}{Climate worry} & \multicolumn{1}{l|}{Carbon tax} & \multicolumn{1}{l|}{100\% RE} & \multicolumn{1}{l|}{Siting RE} & \multicolumn{1}{l|}{GND}   \\ \hline
\multicolumn{1}{|l|}{All participants}           & \multicolumn{1}{l|}{21.35\%}   & \multicolumn{1}{l|}{30.15\%}      & \multicolumn{1}{l|}{26.24\%}    & \multicolumn{1}{l|}{35.94\%}     & \multicolumn{1}{l|}{27.94\%}\\ \hline
\multicolumn{1}{|l|}{Democrats}                  & \multicolumn{1}{l|}{17.40\%}    & \multicolumn{1}{l|}{27.03\%}      & \multicolumn{1}{l|}{23.04\%}    & \multicolumn{1}{l|}{34.13\%}     & \multicolumn{1}{l|}{23.26\%} \\
\multicolumn{1}{|l|}{Democrats who watch Fox}         & \multicolumn{1}{l|}{28.03\%}   & \multicolumn{1}{l|}{32.58\%}      & \multicolumn{1}{l|}{26.31\%}    & \multicolumn{1}{l|}{39.95\%}     & \multicolumn{1}{l|}{26.54\%} \\ \hline
\multicolumn{1}{|l|}{Republicans}                & \multicolumn{1}{l|}{28.08\%}   & \multicolumn{1}{l|}{35.12\%}      & \multicolumn{1}{l|}{31.10\%}     & \multicolumn{1}{l|}{39.95\%}     & \multicolumn{1}{l|}{34.66\%} \\
\multicolumn{1}{|l|}{Republicans who watch Fox}  & \multicolumn{1}{l|}{30.20\%}    & \multicolumn{1}{l|}{35.78\%}      & \multicolumn{1}{l|}{32.21\%}    & \multicolumn{1}{l|}{41.79\%}     & \multicolumn{1}{l|}{36.16\%} \\ \hline
\multicolumn{1}{|l|}{Independents}               & \multicolumn{1}{l|}{19.59\%}   & \multicolumn{1}{l|}{29.09\%}      & \multicolumn{1}{l|}{25.93\%}    & \multicolumn{1}{l|}{34.42\%}     & \multicolumn{1}{l|}{28.00\%}    \\
\multicolumn{1}{|l|}{Independents who watch Fox} & \multicolumn{1}{l|}{25.09\%}   & \multicolumn{1}{l|}{32.12\%}      & \multicolumn{1}{l|}{28.53\%}    & \multicolumn{1}{l|}{38.24\%}     & \multicolumn{1}{l|}{32.41\%} \\ \hline
\end{tabular}
\label{table:all-demographics}
\end{table*}

Fig.~\ref{fig:correlation} shows Pearson correlations between binary media consumption (watching a given outlet at least once per week) and group-level misperception $\beta_\text{group}$.
Media outlets are ordered from most liberal-leaning (left) to most conservative-leaning (right).
Positive values (red) indicate that more frequent consumption of a given outlet is associated with greater underestimation of public support; negative values (blue) indicate the opposite.
Conservative-leaning outlets (Fox News, Other Conservative) show the strongest positive associations with misperception across all five issues, while liberal-leaning outlets show weak negative associations.

\begin{figure}[hbtp]
    \centering
    \includegraphics[width=\linewidth]{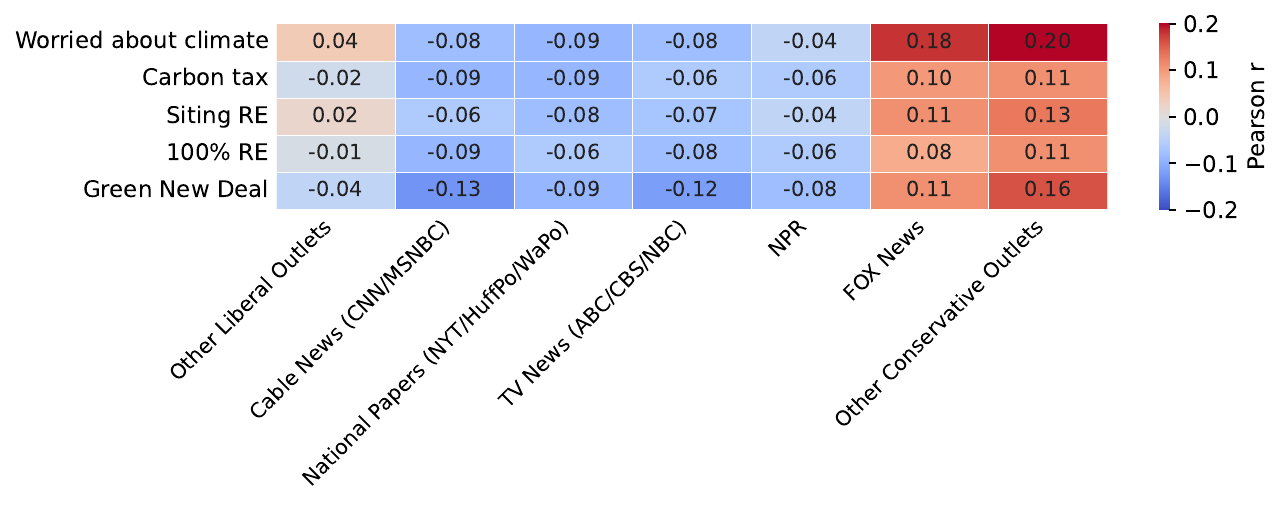}
    \caption{Pearson correlations between binary media consumption (watching a given outlet
    at least once per week) and group-level misperception $\beta_{\text{group}}$ for five
    climate positions, from Sparkman et al.'s survey \cite{sparkman2022americans}.
    Media outlets are ordered from most liberal-leaning (left) to most conservative-leaning (right).
    Positive values (red) indicate that more frequent consumption of a given outlet is associated
    with greater underestimation of public support; negative values (blue) indicate the opposite.
    Conservative-leaning outlets (Fox News, Other Conservative) show the strongest positive
    associations with misperception across all five issues, while liberal-leaning outlets
    show weak negative associations.}
    \label{fig:correlation}
\end{figure}

\section{Convergence of Misperception in Stochastic Block Models}
\label{sec:Extended-Proofs-for-SBM}
In this section, we provide the proofs of the convergence results stated in Theorem 1. We first analyze the misperception measures associated with the deterministic weighted network $\bar{W}^N$ (Lemma \ref{averaged_infinite}), and then establish the almost sure convergence of the corresponding measures on the sampled stochastic block model network $A^N$ (Lemma \ref{sampled_averaged}). For notational convenience, we set $A_{ii}^N=\bar{W}_{ii}^N=1$ for all $i$, thereby incorporating each individual's own opinion through the diagonal entries of the network matrix. This is equivalent to the formulation of misperception used in the main text. 

\begin{lemma}\label{averaged_infinite}
     $\lim_{N\rightarrow{\infty}} \left|\beta_{V}^{}(\bar{W}^{N})-G(f_s,H_{\text{in}}^{s},H_{\text{in}}^{o},H_{\text{out}})\right|=0$ where $$G=G(f_s,H_{\text{in}}^{s},H_{\text{in}}^{o},H_{\text{out}})=f_s - \left((f_s) \frac{f_sH_{\text{in}}^{s}}{f_sH_{\text{in}}^{s}+(1-f_s)H_{\text{out}}} + (1-f_s)\frac{f_sH_{\text{out}}}{f_sH_{\text{out}} + (1-f_s)H_{\text{in}}^{o}}\right).$$
\end{lemma}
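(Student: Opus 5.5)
The plan is a direct computation. The weighted network $\bar W^N$ is block-constant, so every supporter has the same local estimate and every opponent has the same local estimate. We will compute these two numbers exactly for finite $N$ and then pass to the limit.

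Write $n_s=\lfloor f_s N\rfloor$ and $n_o=N-n_s$, and use the convention $\bar W^N_{ii}=1$ stated above. The first step checks which indices lie in which block. For $i\in S$ we have $i/N\le f_s$, so $W(i/N,j/N)$ equals $H^s_{\text{in}}$ when $j\in S$ and $H_{\text{out}}$ when $j\in O$. The boundary index $i=n_s+1$ has $i/N>f_s$ because $n_s+1>f_sN$, so it falls in the opponent block as required. Every $i\in S$ therefore sees the estimate
\[
\hat f_{s,i}=\frac{1+(n_s-1)H^s_{\text{in}}}{1+(n_s-1)H^s_{\text{in}}+n_oH_{\text{out}}}=:q_S^N .
\]
Here the numerator counts the node's own support plus the weighted supporting neighbours. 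Similarly, every $i\in O$ sees
\[
\hat f_{s,i}=\frac{n_sH_{\text{out}}}{1+n_sH_{\text{out}}+(n_o-1)H^o_{\text{in}}}=:q_O^N ,
\]
since the node's own opinion contributes $0$ to the numerator and $1$ to the denominator.

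Next we average over $V$, which gives
\[
\beta_V(\bar W^N)=f_s-\frac{n_s}{N}q_S^N-\frac{n_o}{N}q_O^N .
\]
It remains to take limits. We have $n_s/N\to f_s$ and $n_o/N\to 1-f_s$, because $|n_s-f_sN|\le 1$. Dividing the numerator and denominator of $q_S^N$ by $N$ gives
\[
q_S^N\to \frac{f_sH^s_{\text{in}}}{f_sH^s_{\text{in}}+(1-f_s)H_{\text{out}}} .
\]
The same operation on $q_O^N$ gives
\[
q_O^N\to \frac{f_sH_{\text{out}}}{f_sH_{\text{out}}+(1-f_s)H^o_{\text{in}}} .
\]
Both limits are legitimate because the limiting denominators are strictly positive: this uses $H$'s $>0$ and $0<f_s<1$. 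Substituting these limits yields $G$, and continuity of the finite combination then gives $|\beta_V(\bar W^N)-G|\to0$.

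The only point needing care is the bookkeeping at the block boundary and the diagonal terms. These are $O(1/N)$ perturbations, and I would bound them explicitly, for example by
\[
|q_S^N-q_S|\le C/N ,
\]
with $C$ depending on $f_s$ and the $H$ parameters. This bound follows from writing the difference of the two fractions over a common denominator bounded below by $cN$. The same bound gives an explicit rate $O(1/N)$, which should also be convenient for the subsequent random-network lemma.
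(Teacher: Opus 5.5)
Your proposal is correct and follows essentially the same route as the paper. Both compute the block-constant local estimates for supporters and opponents exactly at finite $N$ (the paper's $G_1(N)$ and $G_2(N)$, written via $\lfloor f_sN\rfloor = f_sN+\epsilon_N$), average them with weights $|S|/N$ and $|O|/N$, and pass to the limit. Your explicit boundary check and $O(1/N)$ rate are small additions. Your opponent denominator also correctly keeps the self-term $1$ (i.e.\ $+1/N$ after rescaling), which the paper's displayed $G_2(N)$ drops without affecting the limit.
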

\begin{proof} 
Suppose $i\in{S}$. Then, 
\begin{align*}
\frac{1}{\sum_{j=1}^{N} \bar{W}^{N}_{ij}} \sum_{j=1}^{N} \bar{W}^{N}_{ij} \, s(j) 
& = \frac{(\lfloor {f_s}N\rfloor\ -1) H_{\text{in}}^{s} +1 }{(\lfloor {f_s}N\rfloor\ -1) H_{\text{in}}^{s} +1+  \left(N-\lfloor {f_s}N\rfloor\ \right)H_{\text{out}}}.
\end{align*}

\noindent There exists an $\epsilon_N\in(-1,0]$ such that $\lfloor {f_s}N\rfloor = {f_s}N+\epsilon_N$ (see Lemma \ref{rewriting_floor_function} in the Supporting Information). Substituting $\lfloor {f_s}N\rfloor = {f_s}N+\epsilon_N$ and dividing the numerator and denominator by $N$, we obtain,  

\begin{align*}
\frac{1}{\sum_{j=1}^{N} \bar{W}^{N}_{ij}} \sum_{j=1}^{N} \bar{W}^{N}_{ij} \, s(j) =  \frac{\left(f_s+\epsilon_N/N-1/N\right) H_{\text{in}}^{s}+1/N }{\left(f_s+\epsilon_N/N-1/N\right) H_{\text{in}}^{s}+ 1/N+ \left(1-(f_s+\epsilon_N/N)\right) H_{\text{out}}} = G_1(N)\\
\end{align*}

\noindent Now, suppose $i\in{O}$. Then, 
\begin{align*}
\frac{1}{\sum_{j=1}^{N} \bar{W}^{N}_{ij}} \sum_{j=1}^{N} \bar{W}^{N}_{ij} \, s(j) 
&= \frac{\lfloor {f_s}N\rfloor\  H_{\text{out}} }{\lfloor {f_s}N\rfloor\  H_{\text{out}} + \left(N-\lfloor {f_s}N\rfloor\ -1\right)H_{\text{in}}^{o}+1}\\
&= \frac{ (f_s+\epsilon_{N}/N)   H_{\text{out}} }{(f_s+\epsilon_{N}/N)   H_{\text{out}} + (1-(f_s+\epsilon_{N}/N)-1/N) H_{\text{in}}^{o}} = G_2(N),
\end{align*}

\noindent Now, 
\begin{align*}
 \beta_V^{}(\bar{W}^{N}) &= f_s - \frac{1}{N} \left( |S|G_1(N) +|O| G_2(N)\right)\\
 & =  f_s - \frac{1}{N} \left( \lfloor {f_s}N\rfloor G_1(N) + (N-\lfloor {f_s}N\rfloor) G_2(N)\right)\\
 & = f_s -  \left( ( f_s+\epsilon_N/N )G_1(N) + (1-( f_s+\epsilon_N/N)) G_2(N)\right).
\end{align*}

\noindent Observe, 
\begin{align*}
    &\lim_{N\rightarrow\infty}\beta_V^{}(\bar{W}^{N}) = f_s - \left((f_s) \frac{f_sH_{\text{in}}^{s}}{f_sH_{\text{in}}^{s}+(1-f_s)H_{\text{out}}} + ((1-f_s))\frac{f_sH_{\text{out}}}{f_s H_{\text{out}} + (1-f_s)H_{\text{in}}^{o}}\right)\\ 
    &\implies \lim_{N\rightarrow\infty} \left|\beta_V^{}(\bar{W}^{N}) -\underbrace{f_s - \left((f_s) \frac{f_sH_{\text{in}}^{s}}{f_sH_{\text{in}}^{s}+(1-f_s)H_{\text{out}}} + ((1-f_s))\frac{f_sH_{\text{out}}}{f_s H_{\text{out}} + (1-f_s)H_{\text{in}}^{o}}\right)}_{G}\right|=0.
\end{align*}

\end{proof}

\begin{lemma} \label{sampled_averaged}
   Show $\lim_{N\rightarrow{\infty}} \left|\beta_V^{} (A^{N})- \beta_V^{}(\bar{W}^{N})\right|=0$, with probability 1.
\end{lemma}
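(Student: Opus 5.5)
Our plan is to reduce the statement to a uniform law of large numbers for node degrees. With the convention $A^N_{ii}=\bar W^N_{ii}=1$, each node's local estimate is a ratio
\[
\hat f_{s,i}(A^N)=\frac{1+X_i}{1+X_i+Y_i}\ (i\in S),\qquad \hat f_{s,i}(A^N)=\frac{X_i}{1+X_i+Y_i}\ (i\in O),
\]
where $X_i=\sum_{j\in S,\,j\neq i}A^N_{ij}$ counts supporter neighbours and $Y_i=\sum_{j\in O,\,j\neq i}A^N_{ij}$ counts opponent neighbours. Both are sums of independent Bernoulli variables, and their means $\mathbb{E}X_i,\mathbb{E}Y_i$ are exactly the corresponding sums of $\bar W^N_{ij}$ used in Lemma~\ref{averaged_infinite}. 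Since
\[
|\beta_V(A^N)-\beta_V(\bar W^N)|\le \max_i|\hat f_{s,i}(A^N)-\hat f_{s,i}(\bar W^N)|,
\]
it suffices to show that this maximum tends to $0$ almost surely.

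\textbf{Step 1: uniform concentration.} For fixed $i$, apply Hoeffding's inequality to $X_i$ and $Y_i$. Each is a sum of at most $N$ independent bounded terms, so
\[
P\bigl(|X_i-\mathbb{E}X_i|>tN\bigr)\le 2e^{-2t^2N},
\]
and similarly for $Y_i$. Take $t=t_N=N^{-1/4}$. A union bound over the $N$ nodes and the two counts gives probability at most $4Ne^{-2\sqrt N}$ that some deviation exceeds $N^{3/4}$. This bound is summable in $N$, so by Borel--Cantelli, almost surely for all large $N$,
\[
\max_i\max\bigl(|X_i-\mathbb{E}X_i|,\,|Y_i-\mathbb{E}Y_i|\bigr)\le N^{3/4}.
\]
The edges for different $N$ need not be coupled, because Borel--Cantelli only requires the summability of these probabilities.

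\textbf{Step 2: Lipschitz control of the ratio.} Dividing numerator and denominator by $N$, the function $(x,y)\mapsto (c+x)/(1/N+x+y)$ with $c\in\{0,1/N\}$ is Lipschitz on the region $x+y\ge \kappa$. Its constant is of order $1/\kappa$; in fact the partial derivatives are bounded by $2/(x+y)$. Because $H^s_{\rm in},H^o_{\rm in},H_{\rm out}>0$ and $f_s\in(0,1)$, we have
\[
\frac{\mathbb{E}X_i+\mathbb{E}Y_i}{N}\ge \kappa:=\tfrac12\min(H^s_{\rm in},H^o_{\rm in},H_{\rm out})
\]
for large $N$. On the good event of Step 1, the realized values $(X_i+Y_i)/N$ are also at least $\kappa/2$ eventually. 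Therefore
\[
|\hat f_{s,i}(A^N)-\hat f_{s,i}(\bar W^N)|\le C\,N^{-1/4}
\]
uniformly in $i$, with $C$ depending only on $\kappa$.

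\textbf{Step 3: conclusion.} Averaging over $i\in V$ gives $|\beta_V(A^N)-\beta_V(\bar W^N)|\le CN^{-1/4}\to0$ almost surely. The same bound also yields the analogues for $S$ and $O$ needed for parts (b) and (c) of Theorem~1. Combined with Lemma~\ref{averaged_infinite}, this proves Theorem~1(a).

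The main obstacle is making the concentration \emph{uniform} over all $N$ nodes while keeping the denominators bounded away from zero, since a ratio of random quantities can blow up when the degree is small. The strictly positive lower bounds on the connection probabilities are what keep the denominators of order $N$, and Hoeffding's exponential tails absorb the union bound. Everything else is bookkeeping with the floor function, using $\epsilon_N$ as in Lemma~\ref{averaged_infinite}.
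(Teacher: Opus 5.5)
Your proposal is correct and follows essentially the same route as the paper. Both arguments use Hoeffding, a union bound and Borel--Cantelli to get the eventual uniform bound $N^{3/4}$ on row-sum deviations, then use the positivity of the connection probabilities to keep expected degrees at least $cN$ and control the ratio. Your Lipschitz bound in $(X_i,Y_i)$ is just a repackaging of the paper's add-and-subtract split into $I_1$ and $I_2$, which tracks the supporter count and the total degree instead.
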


\begin{proof} 

\noindent For any $N$, 

\begin{align*}
&\left|\beta_V^{} (A^{N})- \beta_V^{}(\bar{W}^{N})\right|  \\
&=\left|\left(f_s - \frac{1}{N} \sum_{i=1}^{N} 
\frac{1}{\sum_{j=1}^{N} A^{N}_{ij}} 
\sum_{j=1}^{N} A^{N}_{ij} \, s(j)\right)-\left(f_s - \frac{1}{N} \sum_{i=1}^{N} 
\frac{1}{\sum_{j=1}^{N} \bar{W}^{N}_{ij}} 
\sum_{j=1}^{N} \bar{W}^{N}_{ij} \, s(j)\right)\right|\\
&= \left|  \frac{1}{N} \sum_{i=1}^{N} 
\frac{1}{\sum_{j=1}^{N} A^{N}_{ij}} 
\sum_{j=1}^{N} A^{N}_{ij} \, s(j) -\frac{1}{N} \sum_{i=1}^{N} 
\frac{1}{\sum_{j=1}^{N} \bar{W}^{N}_{ij}} 
\sum_{j=1}^{N} \bar{W}^{N}_{ij} \, s(j)\right| \\
&= \left|  \frac{1}{N} \sum_{i=1}^{N} \left(
\frac{\sum_{j=1}^{\lfloor f_s N \rfloor} A^{N}_{ij}}{\sum_{j=1}^{N} A^{N}_{ij}} 
 -\frac{\sum_{j=1}^{\lfloor f_s N \rfloor} \bar{W}^{N}_{ij}}{\sum_{j=1}^{N} \bar{W}^{N}_{ij}} 
\right)\right|\\
&\leq  \frac{1}{N} \sum_{i=1}^{N} \left|
\frac{\sum_{j=1}^{\lfloor f_s N \rfloor} A^{N}_{ij}}{\sum_{j=1}^{N} A^{N}_{ij}} 
 -\frac{\sum_{j=1}^{\lfloor f_s N \rfloor} \bar{W}^{N}_{ij}}{\sum_{j=1}^{N} \bar{W}^{N}_{ij}} \right|\\
&= \frac{1}{N} \sum_{i=1}^{N} \left|
\frac{\sum_{j=1}^{\lfloor f_s N \rfloor} A^{N}_{ij}}{\sum_{j=1}^{N} A^{N}_{ij}} -\frac{\sum_{j=1}^{\lfloor f_s N \rfloor}\bar{W}^{N}_{ij}}{\sum_{j=1}^{N} A^{N}_{ij}} + \frac{\sum_{j=1}^{\lfloor f_s N \rfloor} \bar{W}^{N}_{ij}}{\sum_{j=1}^{N} A^{N}_{ij}} 
 -\frac{\sum_{j=1}^{\lfloor f_s N \rfloor} \bar{W}^{N}_{ij}}{\sum_{j=1}^{N} \bar{W}^{N}_{ij}} \right|\\
 &\leq \frac{1}{N} \sum_{i=1}^{N} \left(\underbrace{\left|\frac{\sum_{j=1}^{\lfloor f_s N \rfloor} A^{N}_{ij}}{\sum_{j=1}^{N} A^{N}_{ij}} -\frac{\sum_{j=1}^{\lfloor f_s N \rfloor}\bar{W}^{N}_{ij}}{\sum_{j=1}^{N} A^{N}_{ij}}\right|}_{I_1} + \underbrace{\left|\sum_{j=1}^{\lfloor f_s N \rfloor} \bar{W}^{N}_{ij}\left(\frac{1}{\sum_{j=1}^{N} A^{N}_{ij}} 
 -\frac{1}{\sum_{j=1}^{N} \bar{W}^{N}_{ij}}\right) \right|}_{I_2}\right)
\end{align*} 

\noindent By Lemma \ref{concentration_inequality} with probability 1, there exists a $N_1$ such that for all $n > N_1$,

$$
\left|\sum_{j = 1}^{\lfloor f_s N\rfloor} A_{ij}^{N} - \sum_{j = 1}^{\lfloor f_s N\rfloor} \bar{W}_{ij}^{N}\right| \leq \lfloor f_s N\rfloor^{3/4}\leq N ^{3/4} \quad \forall i.
$$

\noindent Further, by Lemma \ref{concentration_inequality} with probability 1, there exists an $n_2$ such that for all $n > n_2$,
$$
\left|\sum_{j = 1}^{N} A_{ij}^{N} - \sum_{j = 1}^{N} \bar{W}_{ij}^{N}\right| \leq n ^{3/4} \quad \forall i.
$$

\noindent Choose, $\bar{N} = \max{(N_1,N_2)}$. Suppose $N>{\bar{N}}$, with probability 1, for all $i$,  $I_1\leq \left|  \frac{N^{3/4}}{\sum_{j = 1}^{N} \bar{W}_{ij}^{N}-N^{3/4}} \right|$, and

\begin{align*}
    I_2  &= \left|\sum_{j=1}^{\lfloor {f_s}N\rfloor} \bar{W}^{N}_{ij}\right|\left|\left(\frac{1}{\sum_{j=1}^{N} A^{N}_{ij}} 
 -\frac{1}{\sum_{j=1}^{N} \bar{W}^{N}_{ij}}\right)\right|\\
 &\leq \left|\sum_{j=1}^{n } \bar{W}^{N}_{ij}\right|\left| \frac{\sum_{j=1}^{N} A^{N}_{ij}-\sum_{j=1}^{N} \bar{W}^{N}_{ij}}{\sum_{j=1}^{N} A^{N}_{ij}\sum_{j=1}^{N} \bar{W}^{N}_{ij}}\right|\\
  &\leq \left|\sum_{j=1}^{n } \bar{W}^{N}_{ij}\right|\left| \frac{N^{3/4}}{\left(\sum_{j=1}^{N} \bar{W}^{N}_{ij}-N^{3/4}\right)\sum_{j=1}^{N} \bar{W}^{N}_{ij}}\right|\\
  &\leq \left| \frac{N^{3/4}}{\sum_{j=1}^{N} \bar{W}^{N}_{ij}-N^{3/4}}\right|. 
\end{align*}

\noindent 

\noindent By definition of $\bar{W}_{ij}^{N}$, $\sum_{j = 1}^{N} \bar{W}_{ij}^{N}\geq{c}n$ for some $c>0$. Thus, 

\begin{align*}
    \left|\beta_V^{} (A^{N})- \beta_V^{}(\bar{W}^{N})\right| \leq \frac{2}{n}\sum_{i=1}^{N} \left|\frac{N^{3/4}}{cN-N^{3/4}}\right| = \left|\frac{N^{3/4}}{cN-N^{3/4}}\right|.
\end{align*}

\noindent Since $\lim_{N\rightarrow\infty}\left|\frac{N^{3/4}}{c_{}N-N^{3/4}}\right|={0}$, $\lim_{N\rightarrow{\infty}} \left|\beta_V^{} (A^{N})- \beta_V^{}(\bar{W}^{N})\right|=0$.

\end{proof}

\begin{proof}[Proof of Theorem 1] Part $(a)$ follows from Lemma \ref{averaged_infinite}, Lemma \ref{sampled_averaged}, and the triangle inequality. The proofs for part $(b)$ and $(c)$ are very similar to part $(a)$ and are omitted for brevity.
\end{proof}

\begin{lemma}\label{rewriting_floor_function}
There exists an  $\epsilon\in(-1,0]$ such that $\lfloor x \rfloor = x+\epsilon$.  
\end{lemma}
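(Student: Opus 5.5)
The claim is that for every real $x$ there is an $\epsilon\in(-1,0]$ with $\lfloor x\rfloor = x+\epsilon$. The plan is to take the only possible candidate, $\epsilon := \lfloor x\rfloor - x$, and check that it lies in $(-1,0]$. Once this is done, the identity $\lfloor x\rfloor = x+\epsilon$ holds trivially by construction.

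The one substantive step is to recall the defining property of the floor function. By definition, $\lfloor x\rfloor$ is the largest integer $m$ with $m\le x$. Maximality means $m+1$ is not $\le x$, so $x< m+1$. Combining the two inequalities gives
\[
\lfloor x\rfloor \le x < \lfloor x\rfloor + 1 .
\]
Subtracting $x$ and rearranging yields $\lfloor x\rfloor - x \le 0$ and $\lfloor x\rfloor - x > -1$. Hence $\epsilon\in(-1,0]$.

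Existence of the floor (the largest such integer) I would take as standard, from the Archimedean property and the well-ordering of the integers bounded above. There is no real obstacle here. The only point requiring care is stating the strict inequality $x<\lfloor x\rfloor+1$ correctly, which is what makes the interval open at $-1$.

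In the application in Lemma~\ref{averaged_infinite}, with $x=f_sN$, this gives $\epsilon_N/N\to 0$. That is exactly what the limit computation there needs.
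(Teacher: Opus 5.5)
Your proof is correct and follows the paper's route: both start from $\lfloor x\rfloor \le x < \lfloor x\rfloor + 1$, rearrange it to $-1 < \lfloor x\rfloor - x \le 0$, and take $\epsilon := \lfloor x\rfloor - x$. The only addition in your version is that you derive the strict inequality $x < \lfloor x\rfloor + 1$ from the maximality in the definition of the floor, which the paper takes as known.
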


\begin{proof}
    Observe that $\lfloor  x\rfloor \leq x < \lfloor  x\rfloor+1 \implies 0  \leq x -\lfloor  x\rfloor < 1 \implies 0  \geq \lfloor  x\rfloor-x > -1\implies  0 +x   \geq \lfloor  x\rfloor> -1+x$. Thus, there exists an $\epsilon\in(-1,0]$ such that $\lfloor x \rfloor = x+\epsilon$. 
\end{proof}

\begin{lemma} \label{concentration_inequality}
With probability 1, there exists a $\bar{N}$ such that for all $N > \bar{N}$,

$$
\left|\sum_{j = 1}^{N} A_{ij}^{N} - \sum_{j = 1}^{N} \bar{W}_{ij}^{N}\right| \leq N^{3/4} \quad \forall i.
$$

\end{lemma}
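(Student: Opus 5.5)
The plan is a Hoeffding bound for each fixed row, a union bound over the $N$ rows, and Borel--Cantelli over $N$. Fix $N$ and a row $i$. The off-diagonal entries $A_{ij}^N$, $j\neq i$, are independent Bernoulli variables with means $\bar W_{ij}^N = W_{ij}^N\in(0,1]$. The diagonal convention $A_{ii}^N=\bar W_{ii}^N=1$ contributes no randomness. Hence
\[
X_i^N := \sum_{j=1}^N A_{ij}^N - \sum_{j=1}^N \bar W_{ij}^N = \sum_{j\neq i}\bigl(A_{ij}^N - \mathbb{E}A_{ij}^N\bigr)
\]
is a sum of at most $N$ independent, centered variables, each taking values in an interval of length at most $1$. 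Within row $i$ the entries are independent even though $A^N$ is symmetric, since symmetry only couples entry $(i,j)$ with entry $(j,i)$. Hoeffding's inequality then gives
\[
P\bigl(|X_i^N| > N^{3/4}\bigr) \le 2\exp\!\left(-\frac{2N^{3/2}}{N}\right) = 2e^{-2N^{1/2}}.
\]

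Next, take a union bound over the $N$ rows. For the event $E_N := \{\exists i: |X_i^N|>N^{3/4}\}$ this gives
\[
P(E_N)\le 2N e^{-2\sqrt N}.
\]
This bound is summable in $N$, because $\sum_N N e^{-2\sqrt N}<\infty$ (for instance, $N e^{-2\sqrt N}\le C N^{-2}$ for large $N$).

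Finally, apply the first Borel--Cantelli lemma. Since $\sum_N P(E_N)<\infty$, with probability $1$ only finitely many of the events $E_N$ occur. On that full-measure event there is therefore a (random) $\bar N$ such that for all $N>\bar N$ and all $i$ we have $|X_i^N|\le N^{3/4}$, which is exactly the claim. Borel--Cantelli needs no independence across different $N$, so the argument applies whether the networks $A^N$ for different $N$ are sampled jointly or separately, on any common probability space.

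The same argument applied to partial row sums over $j\le \lfloor f_sN\rfloor$ yields the first bound used in Lemma \ref{sampled_averaged}. That sum has at most $\lfloor f_sN\rfloor$ terms, so Hoeffding with threshold $N^{3/4}$ gives the same bound $2e^{-2\sqrt N}$. The threshold $\lfloor f_sN\rfloor^{3/4}$ stated there is also fine, since $\lfloor f_s N\rfloor\ge cN$ with $c>0$ for large $N$.

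The only step needing care is the summability that Borel--Cantelli requires. The exponent $3/4$ is what makes it work: any exponent $>1/2$ leaves a stretched-exponential tail that beats the factor $N$ from the union bound. I expect no real obstacle beyond stating precisely, in the union-bound step, that within-row independence holds.
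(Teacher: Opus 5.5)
Your proposal is correct and follows the paper's proof: Hoeffding's inequality for each row, a union bound over the $N$ rows, and the first Borel--Cantelli lemma, using that $2N e^{-2\sqrt{N}}$ is summable. The paper keeps the denominator $N-1$, giving $2\exp(-2N^{3/2}/(N-1))$, which you bound by $2e^{-2\sqrt N}$ without loss; your remarks on within-row independence, the common probability space, and the partial-sum bound needed in Lemma~\ref{sampled_averaged} supply details the paper leaves implicit.
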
 

\begin{proof}

By Hoeffding's inequality,

$$
\mathbb{P}\left[\left|\sum_{j=1}^{N} A_{ij}^{N} - \sum_{j=1}^{N} \bar{W}_{ij}^{N}\right| > N^{3/4} \right] = \mathbb{P}\left(\left|\sum_{j \neq i}\left(A_{i j}^N-\bar{W}_{i j}^N\right)\right|>N^{3 / 4}\right)\leq 2 \exp\left(\frac{-2N^{3/2}}{N-1}\right).
$$

\noindent By the union bound,

$$
\mathbb{P}\left[\left|\sum_{j =1}^{n} A_{ij}^{N} - \sum_{j=1}^{N} \bar{W}_{ij}^{N}\right| \leq N^{3/4} \quad \forall i \right] \geq 1 - 2N \exp\left(\frac{-2N^{3/2}}{N-1}\right).
$$

\noindent Since

$$
\sum_{n=1}^{\infty} 2N \exp\left(\frac{-2N^{3/2}}{N-1}\right) < \infty,
$$

\noindent the Borel–Cantelli Lemma implies that with probability 1, there exists a $\bar{N}$ such that for all $N > \bar{N}$,

$$
\left|\sum_{j = 1}^{N} A_{ij}^{N} - \sum_{j = 1}^{N} \bar{W}_{ij}^{N}\right| \leq N^{3/4} \quad \forall i.
$$

\end{proof}

\section{Derivation of Misperception in the Preferential Attachment Model}
\label{appendix:BA}

The analytical expressions for the group-level misperception have been previously derived in literature~\cite{lee2019homophily}. Here we adopt the expressions from Lee et al. (2019) to show misperception $\beta$, which is the difference between the true fraction of the majority and the perceived fraction, compared to the bias, which is defined as the ratio of one's perception of the minority to the true fraction of the minority. This algebraic transformation allows for ease of contrast between stochastic block models, preferential attachment models, and the empirical data~\cite{sparkman2022americans}.

First, recall that the minority (opponent) homophily parameter $h_{oo}$ and the majority (supporter) homophily parameter $h_{ss}$ governs the structure of  majority\textendash minority and minority\textendash majority  ties as follows:
\begin{align}
h_{so} &=1-h_{ss}, \\
h_{os} &=1-h_{oo}.
\end{align}

Since degree growth in this model follows preferential attachment, the minority growth factor $C$ is a polynomial function that can be derived from the following relationship~\cite{lee2019homophily}:

\begin{equation}
C \;=\; f_o\!\left(1 + \frac{h_{oo}\,C}{h_{oo}\,C + h_{os}\,(2 - C)}\right)
       + f_s\!\left(\frac{h_{so}\,C}{h_{ss}\,(2 - C) + h_{so}\,C}\right),
\end{equation}
where $f_s$ is the fraction of supporters in the network and $f_o = 1 - f_s$ is the fraction of opponents.

We then can express the probability of each of the four types of connection ($p_{oo}$, from opponents to opponents; $p_{ss}$, from supporters to supporters; $p_{os}$, from opponents to supporters; and $p_{so}$, from supporters to opponents) as a function of the homophily parameters above and the minority growth factor $C$:

\begin{align}
p_{oo} &= \frac{h_{oo} \, C}{h_{oo} \, C + h_{os}\,(2 - C)}, \\[1ex]
p_{ss} &= \frac{h_{ss}\,(2 - C)}{h_{ss}\,(2 - C) + h_{so}\,C}, \\[1ex]
p_{os} &= \frac{h_{os}\,(2 - C)}{h_{oo}\,C + h_{os}\,(2 - C)}, \\[1ex]
p_{so} &= \frac{h_{so}\,C}{h_{ss}\,(2 - C) + h_{so}\,C}.
\end{align}

We derive the average fraction of supporters perceived by the opponents $\hat{f}_s^{o}$ and the average fraction of supporters perceived by the supporters $\hat{f}_s^{s}$ as follows:
\begin{align}
\hat{f}_s^{o} &= 
\frac{\tfrac{f_s}{f_o}\,p_{so} + p_{os}}{2p_{oo} + \tfrac{f_s}{f_o}\,p_{so} + p_{os}}, \\[1ex]
\hat{f}_s^{s} &= 
\frac{2\,p_{ss}}{2p_{ss} + \tfrac{f_o}{f_s}\,p_{os} + p_{so}}.
\end{align}

\section{Simulations of Overall Misperception in Preferential Attachment Networks}
\label{sec:si-simulations}

In this section, we explore the behavior of preferential attachment network realizations in Monte Carlo simulations. Since these networks exhibit degree heterogeneity, we expect that the mean field approximation described in the main text
does not capture network behavior fully. Here we show that while the magnitude of overall misperception differs from the mean field approximation, the qualitative trends remain the same. We simulate the preferential attachment networks with Bayesian rescaling for different priors ($\delta=0.5$, $1$, $2$) network sizes ($N=100$, $500$), and densities ($m=2$, $5$, $10$). As the prior $\delta$ increases, the size of the region matching the empirical patterns of misperception shrinks.

\begin{figure*}[t]
\centering
\begin{center}
    \includegraphics[width=0.75\linewidth]{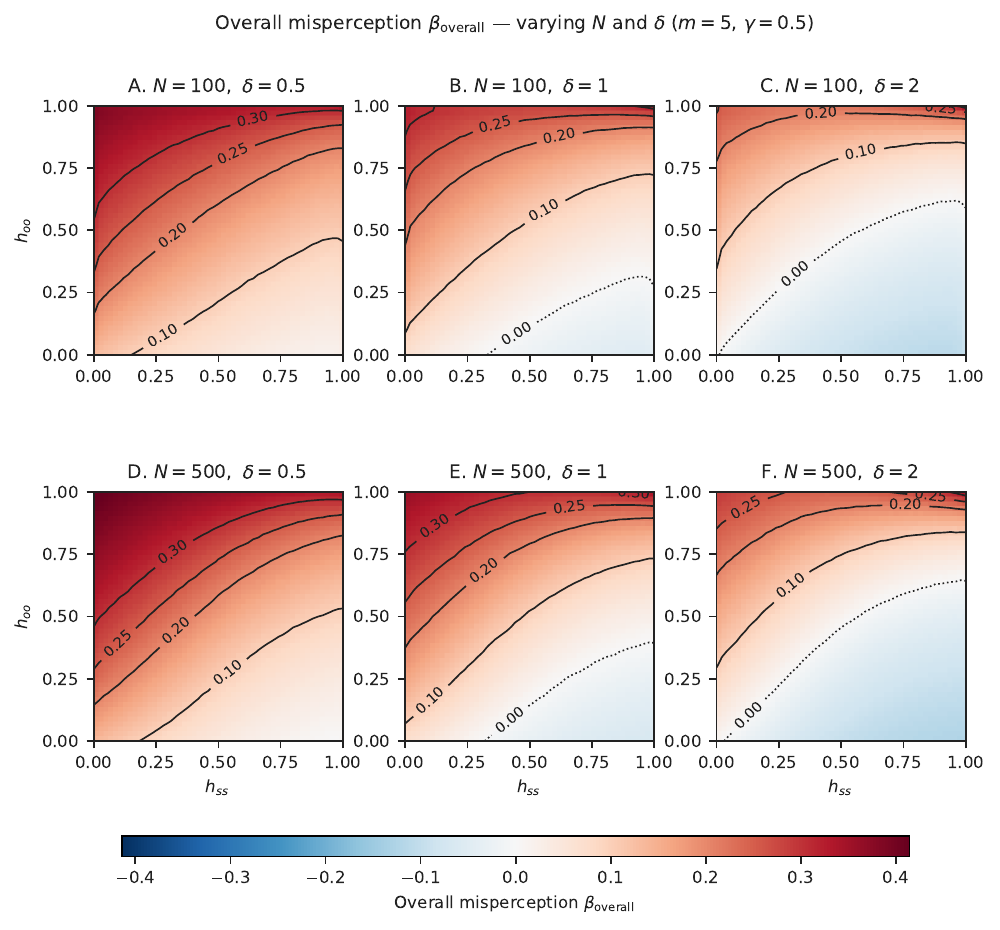}
\end{center}
\caption{Overall misperception $\beta_{\mathrm{overall}}$ as a function of homophily, for varying network sizes and Bayesian priors ($m=5$, $\gamma=0.5$). Each panel shows a heatmap of overall misperception over the grid of supporter homophily $h_{ss}$ and opponent homophily $h_{oo}$ for preferential attachment networks with Bayesian rescaling. The prior varies over columns ($\delta \in \{0.5, 1, 2\}$). The network size varies over the rows ($N \in \{100, 500\}$). Results are averaged across 500 Monte Carlo runs or 100 runs per grid cell.}
\label{fig:si-network-size}
\end{figure*}

\begin{figure*}[t]
\centering
\begin{center}
    \includegraphics[width=0.75\linewidth]{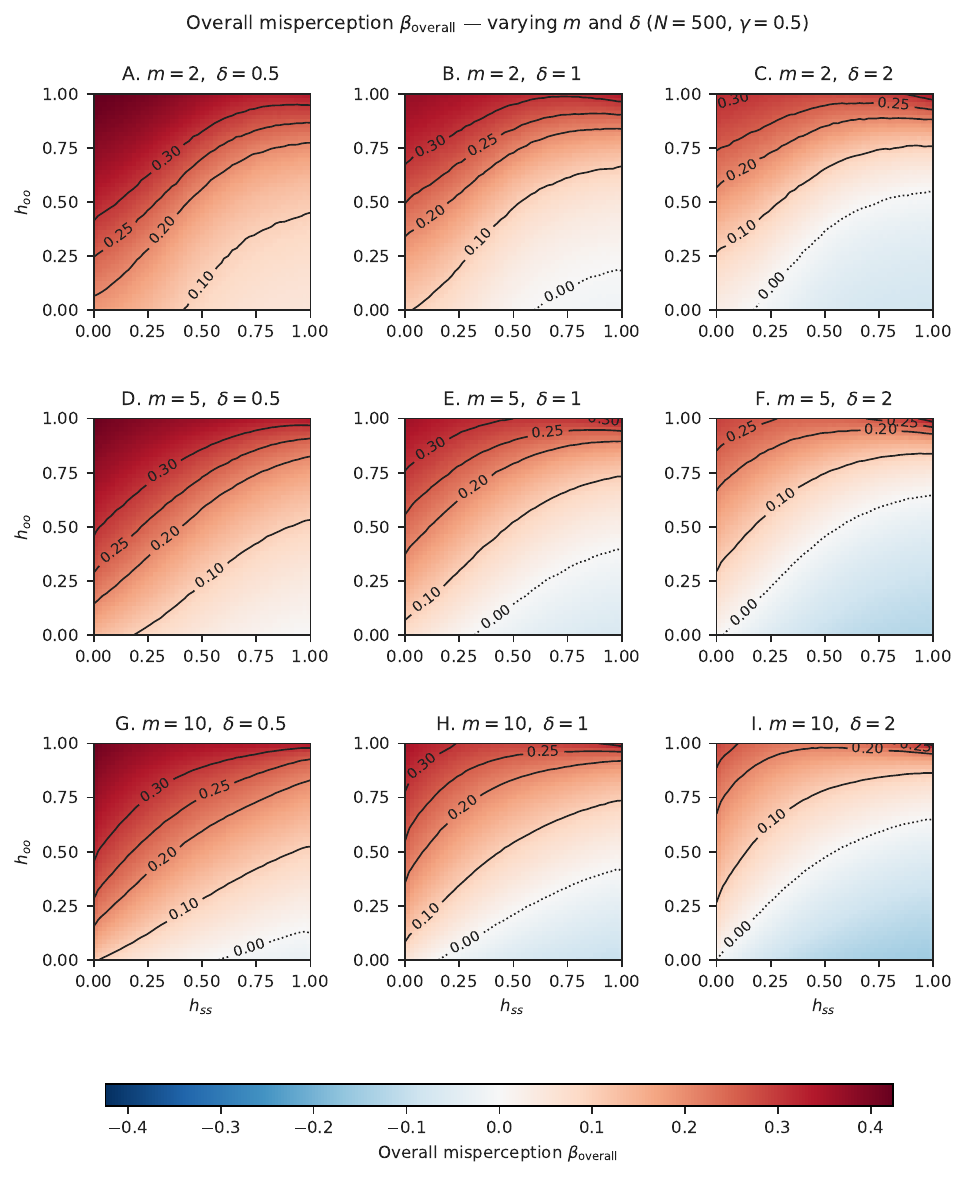}

\end{center}
\caption{Overall misperception $\beta_{\mathrm{overall}}$ as a function of homophily, for varying network densities and Bayesian priors ($N=500$, $\gamma=0.5$ fixed). Each panel shows a heatmap of overall misperception over the grid of supporter homophily $h_{ss}$ and opponent homophily $h_{oo}$, for preferential attachment networks with Bayesian rescaling. The prior varies over columns ($\delta \in \{0.5, 1, 2\}$). The density parameter varies over rows $m \in \{2, 5, 10\}$. The network size is fixed at $N=500$ and the Bayesian uncertainty parameter at $\gamma=0.5$. Contour lines are drawn at $\beta_{\mathrm{overall}} \in \{0.10, 0.20, 0.25, 0.30\}$. Results are averaged across 100 Monte Carlo runs per grid cell.}
\label{fig:si-density}
\end{figure*}

Figure~\ref{fig:si-network-size} shows overall misperception for different network sizes. Similarly to the mean-field approximation, misperception tends to match the empirical pattern when the opponent homophily is higher than the supporter homophily ($h_{oo} > h_{ss}$).
Figure~\ref{fig:si-density} shows overall misperception for different network densities. Again, the misperception is high when opponent homophily exceeds supporter homophily. As networks become denser, the region of parameter space that produces high misperception shrinks. Across these simulations, as the Bayesian prior parameter $\delta$ increases, the region of high misperception shrinks, again matching the mean-field pattern.
Overall, the qualitative trends are consistent with the mean field approximation.

\section{Asymmetric homophily and simulated media bias}

The main text considers the case in which supporters and opponents share the same within-group homophily parameter $h$. Here we relax that assumption, allowing the two groups to have independent homophily parameters $h_{ss}$ (supporter group) and $h_{oo}$ (opponent group). A supporter node attaches to a new neighbor from the same group with probability $h_{ss}$, and an opponent node does so with probability $h_{oo}$. We examine six conditions spanning three qualitative regimes: symmetric homophily ($h_{ss} = h_{oo}$), higher opponent homophily ($h_{oo} > h_{ss}$), and higher supporter homophily ($h_{ss} > h_{oo}$).

Fig.~\ref{fig:asymmetric_swaps} shows misperception $\beta$ as a function of the percentage
of nodes swapped for each of the six conditions, for networks of $N = 1{,}000$ agents with
$m = 5$ and a supporter fraction $f_s = \dfrac{2}{3}$, averaged over 1{,}000 simulated networks.
The gray markers show overall misperception $\beta_{\mathrm{overall}}$; blue and yellow
markers show group-specific misperception for supporters ($\beta_s$) and
opponents ($\beta_o$), respectively.
The shaded vertical band marks the approximate empirical misperception range of 25--30 percentage points
from the survey~\cite{sparkman2022americans}.

Across all six conditions, a moderate number of swaps is sufficient to produce misperception within the empirical range, regardless of whether homophily is symmetric or asymmetric. The minimum swap percentage required is broadly similar across conditions, suggesting that the result is robust to heterogeneity in within-group homophily between the two groups.

\begin{figure*}[t]
\centering
\includegraphics[width=0.5\linewidth]{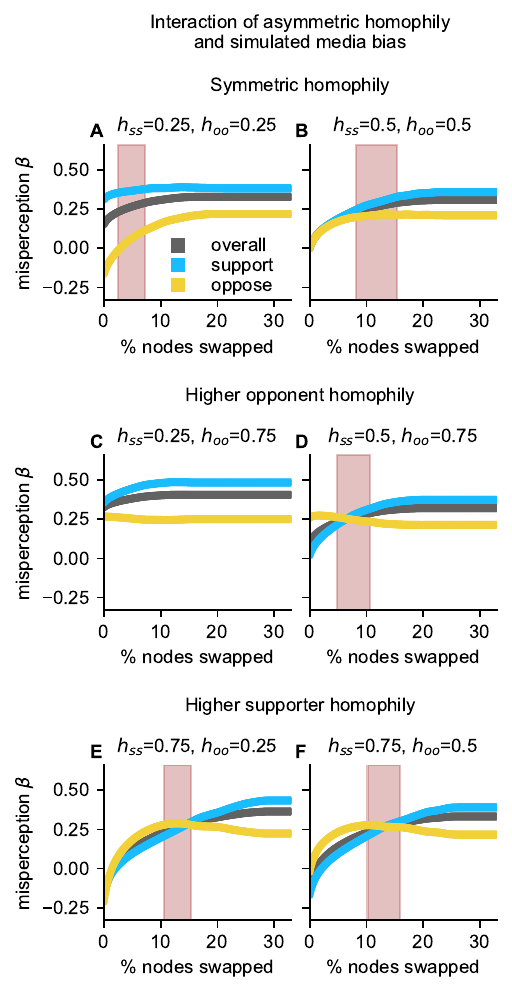}
\caption{Misperception $\beta$ as a function of the percentage of nodes swapped under
    asymmetric homophily, for networks of $N=1{,}000$ agents ($m=5$, supporter fraction
    $f_s=\dfrac{2}{3}$), averaged over 1{,}000 simulations.
    Gray: overall misperception $\beta_{\mathrm{overall}}$.
    Blue: supporters' misperception $\beta_s$.
    Yellow: opponents' misperception $\beta_o$.
    The shaded band marks the approximate empirical misperception range (25--30 percentage points).
    \textbf{(A,~B)} Symmetric homophily ($h_{ss} = h_{oo}$).
    \textbf{(C,~D)} Higher opponent homophily ($h_{oo} > h_{ss}$).
    \textbf{(E,~F)} Higher supporter homophily ($h_{ss} > h_{oo}$). In all conditions, swapping a moderate fraction of nodes drives overall misperception into the empirical range.}
\label{fig:asymmetric_swaps}
\end{figure*}

Fig.~\ref{fig:asymmetric_diagram} summarizes the minimum swap percentage needed to reach
the empirical range across the full $(h_{ss}, h_{oo})$ parameter space.
Each cell shows the minimum swap percentage for that condition; a cross ($\times$) indicates that
misperception is already above the empirical range at zero swaps.
The blue (lower-left) and yellow (upper-right) triangles mark the regions where
$h_{ss} > h_{oo}$ and $h_{oo} > h_{ss}$, respectively; the diagonal marks the symmetric case.
The required swap percentage is similar throughout the parameter space, confirming that the mechanism does not depend on the specific allocation of homophily between groups.

\begin{figure}[t!]
\centering
\includegraphics[width=0.45\linewidth]{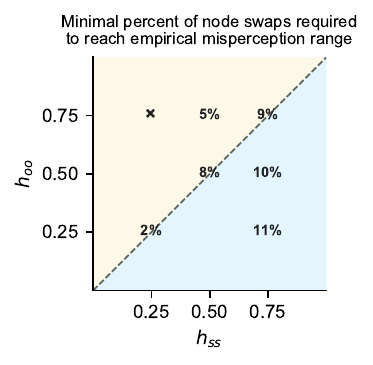}
\caption{Minimum percentage of node swaps required to bring $\beta_{\mathrm{overall}}$
    into the approximate empirical range (25--30 percentage points) for each $(h_{ss}, h_{oo})$ condition
    ($N=1{,}000$, $m=5$, averaged over 1{,}000 simulations).
    A cross ($\times$) indicates that misperception already falls within the empirical range at
    zero swaps.
    Blue triangle (lower-left): $h_{ss} > h_{oo}$ (higher supporter homophily).
    Yellow triangle (upper-right): $h_{oo} > h_{ss}$ (higher opponent homophily).
    The required swap percentage is largely invariant to how homophily is distributed
    between the two groups.}
\label{fig:asymmetric_diagram}
\end{figure}

Tables~\ref{tab:asym_cond_25_25}--\ref{tab:asym_cond_75_50} report misperception
for networks with supporter fraction $f_s = \nicefrac{2}{3}$ across
seven homophily combinations at six swap checkpoints (0, 10, 50, 100, 200, and
300 swaps), for $m = 2$ and $m = 5$.
Each entry is the mean (SD) across 1,000 simulated networks of $N = 1{,}000$ agents.
$\beta_s$ and $\beta_o$ give group-specific misperception for supporters and opponents,
respectively; $\beta_{\mathrm{overall}}$ is the population-weighted average. The standard deviation in misperception does not exceed 3\% in any condition.

\FloatBarrier

\begin{table*}[htbp]
\centering
\caption{Misperception $\beta$ (mean and SD across 1,000 simulations) at each swap checkpoint, for $h_{ss} = 0.25$, $h_{oo} = 0.25$ ($N = 1,000$ agents, supporter fraction $f_s = \nicefrac{2}{3}$). $\beta = f_s \cdot 100 - \hat{p}$ in percentage points; positive values indicate underestimation of support. $\beta_s$: supporter group; $\beta_o$: opponent group; $\beta_{\mathrm{overall}}$: population-weighted average.}
\label{tab:asym_cond_25_25}
\begin{tabular}{rrrrrrr}
\toprule
Swaps & \multicolumn{3}{c}{$m=2$} & \multicolumn{3}{c}{$m=5$} \\
\cmidrule(lr){2-4} \cmidrule(lr){5-7}
 & $\beta_s$ & $\beta_o$ & $\beta_{\mathrm{overall}}$ & $\beta_s$ & $\beta_o$ & $\beta_{\mathrm{overall}}$ \\
\midrule
0 & 32.15 (2.32) & -15.87 (1.85) & 16.11 (1.92) & 31.15 (1.70) & -16.43 (1.17) & 15.26 (1.38) \\
10 & 36.14 (1.88) & -4.97 (1.85) & 22.41 (1.53) & 33.84 (1.44) & -8.34 (1.27) & 19.75 (1.19) \\
50 & 40.76 (1.55) & 12.15 (1.70) & 31.20 (1.23) & 36.69 (1.16) & 6.60 (1.09) & 26.64 (0.93) \\
100 & 44.95 (1.57) & 20.79 (1.26) & 36.88 (1.08) & 38.29 (1.04) & 15.84 (0.86) & 30.79 (0.76) \\
200 & 49.41 (1.18) & 22.27 (0.80) & 40.35 (0.81) & 38.22 (0.95) & 21.85 (0.58) & 32.76 (0.60) \\
300 & 49.33 (1.19) & 22.29 (0.80) & 40.29 (0.82) & 38.19 (0.95) & 21.86 (0.59) & 32.74 (0.60) \\
\bottomrule
\end{tabular}
\end{table*}

\begin{table*}[htbp]
\centering
\caption{Misperception $\beta$ (mean and SD across 1,000 simulations) at each swap checkpoint, for $h_{ss} = 0.5$, $h_{oo} = 0.5$ ($N = 1,000$ agents, supporter fraction $f_s = \nicefrac{2}{3}$). $\beta = f_s \cdot 100 - \hat{p}$ in percentage points; positive values indicate underestimation of support. $\beta_s$: supporter group; $\beta_o$: opponent group; $\beta_{\mathrm{overall}}$: population-weighted average.}
\label{tab:asym_cond_50_50}
\begin{tabular}{rrrrrrr}
\toprule
Swaps & \multicolumn{3}{c}{$m=2$} & \multicolumn{3}{c}{$m=5$} \\
\cmidrule(lr){2-4} \cmidrule(lr){5-7}
 & $\beta_s$ & $\beta_o$ & $\beta_{\mathrm{overall}}$ & $\beta_s$ & $\beta_o$ & $\beta_{\mathrm{overall}}$ \\
\midrule
0 & 0.15 (2.56) & -0.00 (2.94) & 0.10 (2.41) & -0.07 (1.83) & -0.18 (2.00) & -0.11 (1.74) \\
10 & 10.46 (2.08) & 10.16 (2.40) & 10.36 (1.83) & 7.36 (1.58) & 7.15 (1.70) & 7.29 (1.44) \\
50 & 24.24 (1.82) & 21.29 (1.67) & 23.25 (1.38) & 18.82 (1.32) & 17.03 (1.17) & 18.22 (1.03) \\
100 & 34.53 (1.68) & 24.69 (1.19) & 31.24 (1.12) & 27.31 (1.19) & 20.54 (0.82) & 25.04 (0.83) \\
200 & 47.81 (1.48) & 20.88 (0.69) & 38.81 (0.94) & 35.09 (1.01) & 21.27 (0.47) & 30.47 (0.63) \\
300 & 47.95 (1.18) & 20.78 (0.77) & 38.88 (0.79) & 35.78 (1.09) & 20.96 (0.44) & 30.83 (0.67) \\
\bottomrule
\end{tabular}
\end{table*}

\begin{table*}[htbp]
\centering
\caption{Misperception $\beta$ (mean and SD across 1,000 simulations) at each swap checkpoint, for $h_{ss} = 0.75$, $h_{oo} = 0.75$ ($N = 1,000$ agents, supporter fraction $f_s = \nicefrac{2}{3}$). $\beta = f_s \cdot 100 - \hat{p}$ in percentage points; positive values indicate underestimation of support. $\beta_s$: supporter group; $\beta_o$: opponent group; $\beta_{\mathrm{overall}}$: population-weighted average.}
\label{tab:asym_cond_75_75}
\begin{tabular}{rrrrrrr}
\toprule
Swaps & \multicolumn{3}{c}{$m=2$} & \multicolumn{3}{c}{$m=5$} \\
\cmidrule(lr){2-4} \cmidrule(lr){5-7}
 & $\beta_s$ & $\beta_o$ & $\beta_{\mathrm{overall}}$ & $\beta_s$ & $\beta_o$ & $\beta_{\mathrm{overall}}$ \\
\midrule
0 & -18.61 (1.05) & 25.98 (2.76) & -3.72 (1.27) & -18.27 (0.66) & 26.45 (1.93) & -3.33 (0.89) \\
10 & -3.90 (1.64) & 31.84 (2.21) & 8.04 (1.39) & -7.60 (0.93) & 30.28 (1.59) & 5.05 (0.87) \\
50 & 16.03 (1.74) & 34.28 (1.49) & 22.13 (1.26) & 9.74 (1.08) & 31.62 (1.07) & 17.05 (0.77) \\
100 & 30.34 (1.76) & 31.10 (1.08) & 30.59 (1.17) & 22.45 (1.10) & 28.34 (0.74) & 24.42 (0.71) \\
200 & 47.45 (1.76) & 21.64 (0.66) & 38.83 (1.11) & 34.86 (0.97) & 21.83 (0.50) & 30.51 (0.61) \\
300 & 48.17 (1.26) & 20.91 (0.76) & 39.07 (0.86) & 36.05 (1.06) & 20.96 (0.45) & 31.01 (0.64) \\
\bottomrule
\end{tabular}
\end{table*}

\begin{table*}[htbp]
\centering
\caption{Misperception $\beta$ (mean and SD across 1,000 simulations) at each swap checkpoint, for $h_{ss} = 0.25$, $h_{oo} = 0.75$ ($N = 1,000$ agents, supporter fraction $f_s = \nicefrac{2}{3}$). $\beta = f_s \cdot 100 - \hat{p}$ in percentage points; positive values indicate underestimation of support. $\beta_s$: supporter group; $\beta_o$: opponent group; $\beta_{\mathrm{overall}}$: population-weighted average.}
\label{tab:asym_cond_25_75}
\begin{tabular}{rrrrrrr}
\toprule
Swaps & \multicolumn{3}{c}{$m=2$} & \multicolumn{3}{c}{$m=5$} \\
\cmidrule(lr){2-4} \cmidrule(lr){5-7}
 & $\beta_s$ & $\beta_o$ & $\beta_{\mathrm{overall}}$ & $\beta_s$ & $\beta_o$ & $\beta_{\mathrm{overall}}$ \\
\midrule
0 & 37.83 (2.50) & 29.31 (1.81) & 34.98 (1.96) & 35.56 (2.04) & 26.57 (1.17) & 32.56 (1.54) \\
10 & 41.80 (2.01) & 30.02 (1.58) & 37.86 (1.53) & 38.74 (1.69) & 26.55 (1.08) & 34.67 (1.26) \\
50 & 49.07 (1.57) & 29.65 (1.30) & 42.59 (1.16) & 44.62 (1.31) & 25.38 (0.91) & 38.19 (0.95) \\
100 & 55.18 (1.29) & 27.52 (1.13) & 45.94 (0.92) & 48.07 (1.09) & 24.54 (0.78) & 40.21 (0.80) \\
200 & 56.95 (1.38) & 26.77 (0.81) & 46.87 (1.01) & 48.38 (0.95) & 25.01 (0.76) & 40.58 (0.70) \\
300 & 56.95 (1.38) & 26.77 (0.81) & 46.87 (1.01) & 48.38 (0.95) & 25.01 (0.76) & 40.58 (0.70) \\
\bottomrule
\end{tabular}
\end{table*}

\begin{table*}[htbp]
\centering
\caption{Misperception $\beta$ (mean and SD across 1,000 simulations) at each swap checkpoint, for $h_{ss} = 0.75$, $h_{oo} = 0.25$ ($N = 1,000$ agents, supporter fraction $f_s = \nicefrac{2}{3}$). $\beta = f_s \cdot 100 - \hat{p}$ in percentage points; positive values indicate underestimation of support. $\beta_s$: supporter group; $\beta_o$: opponent group; $\beta_{\mathrm{overall}}$: population-weighted average.}
\label{tab:asym_cond_75_25}
\begin{tabular}{rrrrrrr}
\toprule
Swaps & \multicolumn{3}{c}{$m=2$} & \multicolumn{3}{c}{$m=5$} \\
\cmidrule(lr){2-4} \cmidrule(lr){5-7}
 & $\beta_s$ & $\beta_o$ & $\beta_{\mathrm{overall}}$ & $\beta_s$ & $\beta_o$ & $\beta_{\mathrm{overall}}$ \\
\midrule
0 & -17.75 (0.79) & -22.08 (1.70) & -19.19 (0.84) & -16.47 (0.50) & -21.19 (1.17) & -18.05 (0.51) \\
10 & -1.98 (1.70) & 0.65 (2.66) & -1.10 (1.74) & -5.44 (1.01) & -3.88 (1.87) & -4.92 (1.11) \\
50 & 15.43 (1.69) & 22.79 (2.13) & 17.89 (1.44) & 9.56 (1.08) & 18.22 (1.67) & 12.46 (1.02) \\
100 & 27.26 (1.69) & 31.25 (1.53) & 28.59 (1.22) & 19.65 (1.13) & 27.45 (1.18) & 22.25 (0.89) \\
200 & 46.41 (1.66) & 26.94 (0.99) & 39.91 (1.04) & 35.73 (1.09) & 26.56 (0.58) & 32.66 (0.71) \\
300 & 52.84 (1.28) & 23.50 (0.75) & 43.04 (0.89) & 43.43 (1.05) & 22.34 (0.51) & 36.38 (0.68) \\
\bottomrule
\end{tabular}
\end{table*}

\begin{table*}[htbp]
\centering
\caption{Misperception $\beta$ (mean and SD across 1,000 simulations) at each swap checkpoint, for $h_{ss} = 0.5$, $h_{oo} = 0.75$ ($N = 1,000$ agents, supporter fraction $f_s = \nicefrac{2}{3}$). $\beta = f_s \cdot 100 - \hat{p}$ in percentage points; positive values indicate underestimation of support. $\beta_s$: supporter group; $\beta_o$: opponent group; $\beta_{\mathrm{overall}}$: population-weighted average.}
\label{tab:asym_cond_50_75}
\begin{tabular}{rrrrrrr}
\toprule
Swaps & \multicolumn{3}{c}{$m=2$} & \multicolumn{3}{c}{$m=5$} \\
\cmidrule(lr){2-4} \cmidrule(lr){5-7}
 & $\beta_s$ & $\beta_o$ & $\beta_{\mathrm{overall}}$ & $\beta_s$ & $\beta_o$ & $\beta_{\mathrm{overall}}$ \\
\midrule
0 & 3.01 (2.91) & 27.06 (2.44) & 11.05 (2.47) & 1.97 (2.15) & 25.81 (1.58) & 9.93 (1.80) \\
10 & 12.63 (2.29) & 29.62 (1.91) & 18.31 (1.82) & 9.15 (1.72) & 27.17 (1.25) & 15.17 (1.36) \\
50 & 27.59 (1.95) & 29.57 (1.28) & 28.25 (1.36) & 21.89 (1.36) & 26.19 (0.87) & 23.33 (0.94) \\
100 & 38.99 (1.88) & 26.34 (1.06) & 34.77 (1.19) & 30.94 (1.18) & 23.58 (0.64) & 28.48 (0.77) \\
200 & 48.98 (1.23) & 21.56 (0.81) & 39.82 (0.85) & 37.37 (0.94) & 21.35 (0.53) & 32.02 (0.61) \\
300 & 48.93 (1.24) & 21.63 (0.81) & 39.81 (0.85) & 37.37 (0.93) & 21.36 (0.54) & 32.02 (0.60) \\
\bottomrule
\end{tabular}
\end{table*}

\begin{table*}[htbp]
\centering
\caption{Misperception $\beta$ (mean and SD across 1,000 simulations) at each swap checkpoint, for $h_{ss} = 0.75$, $h_{oo} = 0.5$ ($N = 1,000$ agents, supporter fraction $f_s = \nicefrac{2}{3}$). $\beta = f_s \cdot 100 - \hat{p}$ in percentage points; positive values indicate underestimation of support. $\beta_s$: supporter group; $\beta_o$: opponent group; $\beta_{\mathrm{overall}}$: population-weighted average.}
\label{tab:asym_cond_75_50}
\begin{tabular}{rrrrrrr}
\toprule
Swaps & \multicolumn{3}{c}{$m=2$} & \multicolumn{3}{c}{$m=5$} \\
\cmidrule(lr){2-4} \cmidrule(lr){5-7}
 & $\beta_s$ & $\beta_o$ & $\beta_{\mathrm{overall}}$ & $\beta_s$ & $\beta_o$ & $\beta_{\mathrm{overall}}$ \\
\midrule
0 & -17.98 (0.89) & -3.51 (2.75) & -13.14 (1.20) & -16.90 (0.57) & -2.28 (1.97) & -12.02 (0.84) \\
10 & -2.71 (1.75) & 12.17 (2.63) & 2.26 (1.66) & -6.15 (0.97) & 9.28 (1.97) & -1.00 (1.06) \\
50 & 15.52 (1.78) & 26.87 (1.98) & 19.31 (1.42) & 9.48 (1.13) & 23.24 (1.46) & 14.07 (0.94) \\
100 & 28.11 (1.77) & 31.15 (1.39) & 29.12 (1.26) & 20.47 (1.16) & 27.59 (1.00) & 22.85 (0.83) \\
200 & 47.75 (1.66) & 23.58 (0.88) & 39.68 (1.06) & 35.29 (1.11) & 23.96 (0.57) & 31.50 (0.69) \\
300 & 50.05 (1.21) & 22.18 (0.80) & 40.74 (0.85) & 39.03 (0.95) & 21.65 (0.52) & 33.22 (0.62) \\
\bottomrule
\end{tabular}
\end{table*}

\FloatBarrier

\clearpage

\section{Simulation algorithm}
Simulation of swapping the majority and minority nodes.
\begin{figure}[htbp]
\small
\caption{Simulation of swapping the majority and minority nodes}\label{alg:agent_simulation}
\begin{algorithmic}[1]
\State Input: network $G$ of size $N$, minority\_nodes\_list
\State Output: updated\_minority\_nodes\_list

\Comment \textit{Sort nodes of $G$ by degree in descending order}

\State degree\_sorted\_list= list of nodes of $G$ by degree in descending order

\Comment \textit{Initialize flags and indices}

\State flag1 = 0

\State flag2 = 0

\State $i = 0$

\State $j = -1$

\Comment \textit{Change the highest-degree majority node to be a minority node}

\While{flag1 is 0}
\If{node $i$ in degree\_sorted\_list is not in minority\_nodes}
\State flag1 = 1

\State Add $i$ to minority\_nodes\_list

\Else
\State Increment $i$ by 1
 
\EndIf

\EndWhile

\Comment \textit{Change the lowest-degree minority node to be a majority node}

\While{flag2 is 0}
\If{node $j$ in degree\_sorted\_list is in minority\_nodes}

\State flag2 = 1

\State Remove $j$ from minority\_nodes\_list

\Else
\State Decrement $j$ by 1

\EndIf

\EndWhile

\State updated\_minority\_nodes\_list = minority\_nodes\_list \\

\State Return: updated\_minority\_nodes\_list

\end{algorithmic}
\end{figure}

\end{document}